\newtheoremstyle{plain}
  {}   				
  {}   				
  {\itshape}  
  {}       		
  {\mdseries\scshape} 
  {.}         
  { } 				
  {\thmname{#1}\thmnumber{ #2}\ifx#3\empty\else\ (#3)\fi}
\theoremstyle{plain}
\newtheorem{theorem}{\underline{Theorem}}
\newtheorem{lemma}[theorem]{\underline{Lemma}}       	
\newtheorem{proposition}[theorem]{\underline{Proposition}}
\newtheoremstyle{definition}
  {}   				
  {}   				
  {}  				
  {}      		
  {\mdseries\scshape} 
  {.}         
  { } 				
  {\thmname{#1}\thmnumber{ #2}\ifx#3\empty\else\ (#3)\fi}
\theoremstyle{definition}
\newtheorem{definition}[theorem]{\underline{Definition}}
\newtheorem{example}[theorem]{\underline{Example}}
\newtheorem{remark}[theorem]{\underline{Remark}}
\renewcommand{\[}{\left[}
\newcommand\Cb{\mathds{C}}
\newcommand\Eb{\mathds{E}}
\newcommand\Fb{\mathds{F}}
\newcommand\Pb{\mathds{P}}
\newcommand\Rb{\mathds{R}}
\newcommand\Ac{\mathscr{A}}
\newcommand\Fc{\mathscr{F}}
\newcommand\Gc{\mathscr{G}}
\newcommand\Lc{\mathscr{L}}
\newcommand\Oc{\mathscr{O}}
\newcommand\Pc{\mathscr{P}}
\newcommand\Zc{\mathscr{Z}}
\newcommand\eps{\epsilon}		
\newcommand\om{\omega}
\newcommand\Om{\Omega}
\newcommand\sig{\sigma}
\newcommand\Sig{\Sigma}
\newcommand\Lam{\Lambda}
\newcommand\gam{\gamma}
\newcommand\Gam{\Gamma}
\newcommand\lam{\lambda}
\newcommand\del{\delta}
\newcommand\zb{\bar{z}}
\newcommand\Tb{{\overline{T}}}
\newcommand{\vb}{\bar{v}}
\newcommand{\sigb}{\bar{\sig}}
\newcommand\Cv{\mathbf{C}} 
\newcommand\mv{\textbf{m}} 
\newcommand\psih{\widehat{\psi}}
\newcommand\Ebt{\widetilde{\Eb}}
\newcommand\Pbt{\widetilde{\Pb}}
\newcommand\Act{\widetilde{\Ac}}
\newcommand\Yt{\widetilde{Y}}
\newcommand\Wt{\widetilde{W}}
\renewcommand\d{\partial}
\newcommand\ii{\mathtt{i}}
\newcommand\dd{\mathrm{d}}
\newcommand\ee{\mathrm{e}}
\newcommand{\T}{\top}
\newcommand{\BS}{\text{B}}
\newcommand{\Ff}{\mathfrak{F}}
\newcommand{\Gf}{\mathfrak{G}}
\newcommand{\Hf}{\mathfrak{H}}
\begin{document}

\title{Explicit Caplet Implied Volatilities for Quadratic Term-Structure Models}

\author{
Matthew Lorig
\thanks{Department of Applied Mathematics, University of Washington.  \textbf{e-mail}: \url{mlorig@uw.edu}}
\and
Natchanon Suaysom
\thanks{Department of Applied Mathematics, University of Washington.  \textbf{e-mail}: \url{nsuaysom@uw.edu}}
}

\date{This version: \today}

\maketitle

\begin{abstract}
We derive an explicit asymptotic approximation for implied volatilities of caplets under the assumption that the short-rate is described by a generic quadratic term-structure model.  In addition to providing an asymptotic accuracy result, we perform experiments in order to gauge the numerical accuracy of our approximation.
\end{abstract}

\noindent
\textsc{Keywords}: quadratic term-structure, simple forward rate, implied volatility, caplet.
\\[0.5em]
\textsc{MSC Codes}: 60G99, 35C20, 91-08.
\\[0.5em]
\textsc{JEL Classification}: C600 , C630 , C650, G190, G100.

%
%

\section{Introduction}
In a general \textit{term-structure} framework, the instantaneous short-rate of interest is given by an explicit function of time and some auxiliary factors, which are typically modeled as the solution of a (multi-dimensional) stochastic differential equation.  By far the most well-known class of term-structure models are the \textit{affine term-structure} (ATS) models which, as the name suggests, model the short rate as an affine function of the auxiliary factor process.  Notable ATS models include the Vasicek \cite{vasicek1977equilibrium}, Cox-Ingersoll-Ross {(CIR)} \cite{cox2005theory} and Hull-White \cite{hull1990pricing} models.  ATS models have enjoyed widespread popularity because they allow for zero-coupon bond prices to be written explicitly as exponential affine functions of the auxiliary factors.
\\[0.5em]
A somewhat lesser-known class of term-structure models are the \textit{quadratic term-structure} (QTS) models, which, as the name suggests, model the short rate as an quadratic function of the auxiliary factor process.  
QTS models include some ATS models as special cases and also offer some additional modeling flexibility due to the fact that the zero-coupon bond price can be written as exponential quadratic functions of the auxiliary factors.  Moreover, empirical results from \cite{ahn2002quadratic} indicate that QTS better capture historical bond price than ATS models.
\\[0.5em]
The purpose of this paper is to derive an explicit approximation for implied volatilities of options written on simple forward rates, assuming the underlying short-rate is given by a general QTS model.  The implied volatility approximation we obtain is based on the coefficient polynomial expansion method that was introduced by \cite{pagliarani2011analytical} in order to derive approximate option prices in a scalar setting and later extended in \cite{lorig-pagliarani-pascucci-2} in order to derive approximate implied volatilities in a multi-factor local-stochastic volatility (LSV) setting.  Our work is similar in some senses to \cite{lorig2022options}, who also employ the polynomial expansion method to derive approximate implied volatilities.  But, there are two important differences between that work and ours: (i) \cite{lorig2022options} derived implied volatilities for options on bonds rather than on simple-forward rates and (ii) \cite{lorig2022options} focus on ATS rather than QTS models.  For related work on implied volatility in a Heath-Jarrow-Morton (HJM) setting, we refer the reader to \cite{angelini2006notes}.
\\[0.5em]
The rest of the paper proceeds as follows:  
in Section \ref{sec:model} we introduce a financial market in which the short-rate of interest is described by the class of QTS models. In Section \ref{sec:pricing} we provide a concise review of how to explicitly price options on bonds and simple forward rates (including caplets) in a QTS setting using Fourier transforms.
In Section \ref{sec:lsv-connection} we provide a precise link (see Proposition \ref{prop:connection} and Remark \ref{rmk:lsv}) between simple forward rates and classical multi-factor LSV models.  We use this result in Section \ref{sec:price-asymptotics} 
to develop an explicit approximation for caplet prices.
In Section \ref{sec:imp-vol}, we translate the price approximation into a corresponding approximation of implied volatility.
Lastly, in Section \ref{sec:examples}, {we perform experiments to gauge the numerical accuracy of our implied volatility approximation.}

%
%

\section{Quadratic term-structure models}
\label{sec:model}
We fix a time horizon $\Tb < \infty$ and consider a continuous-time financial market, defined on a filtered probability space $(\Om,\Fc,\Fb,\Pb)$ with no arbitrages and no transaction costs.  The probability measure $\Pb$ represents the market's chosen pricing measure taking the \textit{money market account} $M = (M_t)_{0 \leq t \leq \Tb}$ as num\'eraire.  The filtration $\Fb = (\Fc_t)_{0 \leq t \leq \Tb}$ represents the history of the market.
\\[0.5em]
We suppose that the money market account $M$ is strictly positive, continuous and non-decreasing.  As such, there exists a non-negative $\Fb$-adapted \textit{short-rate} process $R = (R_t)_{0 \leq t \leq \Tb}$ such that
\begin{align}
\dd M_t
	&=	R_t M_t \, \dd t ,  &
M_0
	&> 0 . \label{eq:dM}
\end{align}
We will focus on the case in which the dynamics of the short-rate $R$ are described by a QTS model.
Specifically, let $Y = (Y_t^{(1)}, Y_t^{(2)}, \ldots, Y_t^{(d)})^\top_{0\leq t \leq \Tb}$, be the unique strong solution of a stochastic differential equation (SDE) of the following form
\begin{align}
\dd Y_t
	&=	(  \lam + \Lam \, Y_t ) \, \dd t + \Sig \, \dd W_t , \label{eq:dY} 
\end{align}
where $\lam \in \Rb_+^{d \times 1}$ is a column vector, the matrix $\Lam \in \Rb^{d \times d}$ is diagonalizable and has negative real components of eigenvalues, the matrix $\Sig \in \Rb^{d \times d}$ and $W = (W_t^{(1)}, W_t^{(2)}, \ldots, W_t^{(d)})_{0 \leq t \leq \Tb}^\top$ is a $d$-dimensional $(\Pb,\Fb)$-Brownian motion. Then, following \cite[Section 2]{ahn2002quadratic}, every equivalence class of QTS models has a unique \textit{canonical representation} of the form
\begin{align}
R_t
	&=	r(Y_t)
	:=	q +  Y_t^{\T}  \Xi \, Y_t , 	\label{eq:r-def} 
\end{align}
for some constant $q \in \Rb_+$ and some matrix $\Xi \in \Rb^{d \times d}$ that is positive semidefinite and satisfies $\Xi_{i,i} = 1$ for $i = 1, 2, \ldots, d$.  Note that the restrictions on $q$ and $\Xi$ guarantee that the short-rate $R$ is non-negative.

%
%

\section{Pricing options on bonds and simple forward rates in a QTS setting}
\label{sec:pricing}
In this section we provide a formal review of how to explicitly price options on bonds and simple forward rates in the QTS setting. For a rigorous treatment of the results presented below, we refer the reader to \cite[Section 4.3]{chen2004quadratic}.  To begin, for any $T \leq \Tb$, column vector $\nu \in \Cb^{d \times 1}$ and matrix $\Omega \in \Cb^{d\times d}$, let us define $\Gam(\,\cdot\,,\,\cdot\,;T,\nu,\Omega) : [0,T] \times \Rb^{d \times 1} \to \Cb$ by
\begin{align}
\Gam(t,Y_t;T,\nu, \Omega) 
	&:= \Eb_t \exp \Big( -\int_t^T \dd s \,r(Y_s) + \nu^\top Y_T  + Y_T^\top  \Omega \, Y_T   \Big) , \label{eq:Gamma-def} 
\end{align}
where $\Eb_t$ denotes the $\Fc_t$-conditional expectation under $\Pb$. The existence of the function $\Gam$ follows from the Markov property of $Y$.  Formally, the function $\Gam$ satisfies the Kolmogorov backward partial differential equation (PDE)
\begin{align}
(\d_t + \Ac - r ) \Gam(t,\, \cdot \, ; T,\nu, \Omega)
  &=  0 , &
\Gam(T,y;T,\nu,\Omega)
  &=  \exp \Big( \nu^{\T} y  + y^{\T} \Omega \, y \Big) , \label{eq:Gamma-pde} 
\end{align}
where the operator $\Ac$ is the generator of $Y$ under $\Pb$.  Explicitly, the generator $\Ac$ is given by
\begin{align}
\Ac
	&=	( \lam + \Lam y )^\top \nabla_y + \tfrac{1}{2} \text{Tr}(\Sig \Sig^\top \nabla_y \nabla_y^\top ) , \label{eq:A} 
\end{align}
where $\nabla_y = (\d_{y_1}, \d_{y_2}, \ldots, \d_{y_d})^\top$, and ``$\text{Tr}$'' denotes the trace operator.
The solution to \eqref{eq:Gamma-pde} is given by
\begin{align}
\Gam(t,y;T,\nu,\Omega) 
	&=	\exp \Big( - F(t;T,\nu,\Omega) -  G^\top(t;T,\nu,\Omega) y - y^{\T} H(t;T,\nu,\Omega)y  \Big) , \label{eq:Gamma-explicit} 
\end{align}
where, from \cite[Theorem 3.6]{chen2004quadratic}, the scalar-valued function $F(\, \cdot \, ;T,\nu,\Om) : [0,T] \to \Cb$, the vector-valued function $G(\, \cdot \, ;T,\nu,\Om) : [0,T] \to \Cb^{d \times 1}$ and the matrix-valued function $H(\, \cdot \, ;T,\nu,\Om) : [0,T] \to \Cb^{d \times d}$ solve the following system of ordinary differential equations (ODEs) 
\begin{align}
\d_t F(t;T,\nu,\Omega) & = \tfrac{1}{2}G^{\T}(t;T,\nu,\Omega)\Sig \Sig ^{\T}G(t;T,\nu,\Omega)-\text{Tr}(\Sig \Sig^{\T}H(t;T,\nu,\Omega))-G^{\T}(t;T,\nu,\Omega)\lam-q,
\\
F(T;T,\nu,\Omega) & = 0, \label{eq:Fode}
\\
\d_t G(t;T,\nu,\Omega) & = 2H^\T(t;T,\nu,\Omega)\Sig \Sig^\T G(t;T,\nu,\Omega)- \Lam G(t;T,\nu,\Omega)-2H^\T(t;T,\nu,\Omega) \lam ,
\\
G(T;T,\nu,\Omega) & = -\nu, \label{eq:Gode}
\\
\d_t H(t;T,\nu,\Omega) & =  2H^\T(t;T,\nu,\Omega)\Sig\Sig^\T H(t;T,\nu,\Omega)-\Lam H(t;T,\nu,\Omega)- H^\T(t;T,\nu,\Omega) \Lam^\T - \Xi,
\\
H(T;T,\nu,\Omega) & = -\Omega, \label{eq:Hode}
\end{align}
The solution to the system \eqref{eq:Fode},\eqref{eq:Gode} and \eqref{eq:Hode} exists and is unique.
\\[0.5em]
As $F(t;T,0,0)$, $G(t;T,0,0)$ and $H(t;T,0,0)$ will appear frequently throughout this paper, it will be convenient to define
\begin{align}
\Ff(t;T)
	&:=F(t;T,0,0) , &
\Gf(t;T)
	&:=G(t;T,0,0) , &
\Hf(t;T)
	&:=H(t;T,0,0) . \label{eq:F0-G0-H0}
\end{align}
Now, for any $T \leq \Tb$, let us denote by $B^T = (B_t^T)_{0 \leq t \leq T}$ the value of a \textit{zero-coupon bond} that pays one unit of currency at time $T$.
In the absence of arbitrage, the process $B^T/M$ must be a $(\Pb,\Fb)$-martingale.  As such, we have
\begin{align}
\frac{B_t^T}{M_t} 
  &= \Eb_t \Big( \frac{B_T^T}{M_T} \Big) = \Eb_t \Big( \frac{1}{M_T} \Big) ,
\end{align}
where we have used $B_T^T = 1$.  
Solving for $B_t^T$, we obtain
\begin{align}
B_t^T
  &=  \Eb_t \Big( \frac{M_t}{M_T} \Big) = \Eb_t \Big( \ee^{- \int_t^T \dd s \, r(Y_s)} \Big) = \Gam(t,Y_t;T,0,0) \\
	&=	\exp \Big(-\Ff(t;T) - \Gf^\T(t;T) Y_t  - Y^{\T}_t \Hf(t;T)  Y_t \Big) , \label{eq:B-explicit} 
\end{align}
where the third equality follows from \eqref{eq:Gamma-def} and the fourth equality follows from \eqref{eq:Gamma-explicit} and \eqref{eq:F0-G0-H0}. 

\subsection{Pricing options on zero-coupon bonds}
Let $U = (U_t)_{0 \leq t \leq T}$ denote the value of a European option that pays $\psi( \log B_T^\Tb )$ at time $T$ for some function $\psi : \Rb_- \to \Rb$.  With the aim of finding $U_t$, let $\psih:\Cb \to \Cb$ denote the generalized Fourier transform of $\psi$, which {is defined} as follows
\begin{align}
\psih(\om)
	&:=	\int_{-\infty}^{\infty} \dd x \, \ee^{-\ii \om x} \psi(x) , &
\om
	&=	\om_r + \ii \om_i , &
\om_r, \om_i
	&\in \Rb . \label{eq:ft}
\end{align}
We can recover $\psi$ from $\psih$ using the inverse Fourier transform
\begin{align}
\psi(x)
	&=	\frac{1}{2\pi} \int_{-\infty}^{\infty} \dd \om_r \, \ee^{\ii \om x} \psih(\om) . \label{eq:ift}
\end{align}
Now, noting that, in the absence of arbitrage, the process $U/M$ must be a $(\Pb,\Fb)$-martingale, we have
\begin{align}
\frac{U_t}{M_t} 
	&= \Eb_t \Big( \frac{U_T}{M_T} \Big) 
	= 	\Eb_t \Big( \frac{\psi(\log B_T^\Tb)}{M_T} \Big) .
\end{align}
Solving for $U_t$, we obtain
\begin{align}
U_t
	&=	\Eb_t \exp \Big( -\int_t^T \dd s \, r(Y_s)  \Big) \psi( \log B_T^\Tb ) \label{eq:conditional-expectation} \\
	&=	\frac{1}{2\pi} \int_{-\infty}^{\infty} \dd \om_r \, \psih(\om) \Eb_t \ee^{  -\int_t^T \dd s \, r(Y_s)  + \ii \om \log B_T^\Tb } \\
	&=	\frac{1}{2\pi} \int_{-\infty}^{\infty} \dd \om_r \, \psih(\om) \Eb_t \ee^{ -\int_t^T \dd s \, r(Y_s)  } \Eb_T \ee^{ \ii \om \log B_T^\Tb } \\
	&=	\frac{1}{2\pi} \int_{-\infty}^{\infty} \dd \om_r \, \psih(\om) \ee^{ - \ii \om \Ff(T;\Tb) } 
			\Eb_t \ee^{ -\int_t^T \dd s \, r(Y_s)  - \ii \om  \Gf^\T(T;\Tb) Y_T - \ii \om Y^\T_T \Hf(T;\Tb) Y_T } \\
	&=	\frac{1}{2\pi} \int_{-\infty}^{\infty} \dd \om_r \, \psih(\om) \ee^{ - \ii \om \Ff(T;\Tb) } 
			\Gam\big(t,Y_t;T,-\ii \om \Gf(T;\Tb), -\ii \om \Hf(T;\Tb) \big)	\label{eq:u-integral} \\
	&=: u(t,Y_t;T,\Tb) . \label{eq:u-def}
\end{align}
In general, the inverse Fourier integral \eqref{eq:u-integral} that defines $u$ must be computed numerically.  

\begin{remark}
As $\log B_T^\Tb \leq 0$ $\Pb$-a.s., values of $\psi(x)$ for $x > 0$ do not affect the conditional expectation \eqref{eq:conditional-expectation} and thus do not affect the value $U_t$ of the option.  The values of $\psi(x)$ for $x > 0$ do, however, affect convergence properties of the Fourier transform \eqref{eq:ft} and inverse Fourier transform  \eqref{eq:ift}.  As such, it makes sense to choose values of $\psi(x)$ for $x>0$ so that these integrals converge for some value of $\om_i \in \Rb$.
\end{remark}

\subsection{Pricing options on simple forward rates}
\label{sec:options-on-forward-rates}
The \emph{simple forward rate from $T$ to $\Tb$} is a process $L^{T,\Tb} = (L_t^{T,\Tb})_{0 \leq t \leq T}$, which is defined as follows
\begin{align}
 L^{T,\Tb}_{t} 
	& := \frac{1}{\tau }\Big(\frac{B^{T}_t}{B^{\Tb}_t}-1 \Big), &
\tau
	&:=	\Tb - T . \label{eq:L-def}
\end{align}
Let $V = (V_t)_{0 \leq t \leq T}$ denotes the value of a European forward rate option with \emph{reset date} $T$ and \emph{settlement date} $\Tb$ that pays $\phi(\log L^{T,\Tb}_T)$ at time $\Tb$ for some function $\phi : \Rb \to \Rb$. Because the payoff $\phi(\log L^{T,\Tb}_T)$ to be made at time $\Tb$ is known at time $T$ we have
\begin{align}
V_T
	&=	B_T^\Tb \phi( \log L_T^{T,\Tb} ) . \label{eq:V=B-phi}
\end{align}
To see this, simply note that $V_\Tb = B_\Tb^\Tb \phi( \log L_T^{T,\Tb} ) = \phi( \log L_T^{T,\Tb} )$.
Using \eqref{eq:L-def} and $B_T^T = 1$, we can express $V_T$ as a function of $B_T^\Tb$ as follows
\begin{align}
V_T
	&=	B_T^\Tb \phi \Big( \log \Big[  \frac{1}{\tau} \Big(\frac{1}{B_T^\Tb}-1 \Big) \Big] \Big) 
	=	\ee^{ \log B_T^\Tb } \phi \Big( \log \Big[  \frac{1}{\tau} \Big( \ee^{ - \log B_T^\Tb}-1 \Big) \Big] \Big)
	=: \psi( \log B_T^\Tb ) . \label{eq:psi-def}
\end{align}
Thus, we can view a forward rate option on $L^{T,\Tb}$ with reset date $T$, settlement date $\Tb$ and payoff $\phi(\log L_T^{T,\Tb})$ as a European option on $B^T$ with expiration date $T$ and payoff $\psi( \log B_T^\Tb )$, where $\psi$ is defined in \eqref{eq:psi-def}.
It follows that
\begin{align}
V_t
	&=	u(t,Y_t;T,\Tb) , &
	&\text{where}&
\psi(x)
	&=	\ee^{ x } \phi \Big( \log \Big[  \frac{1}{\tau} \Big( \ee^{ - x}-1 \Big) \Big] \Big) , \label{eq:V-explicit}
\end{align}
with $u$ given by \eqref{eq:u-def}.

\begin{example}
An important example of a European forward rate option is a \textit{caplet}, which has a payoff
\begin{align}
\phi( \log L_T^{T,\Tb} )
	&=	\tau ( \ee^{\log L_T^{T,\Tb}} - \ee^{k} )^+ .
\end{align}
Here, $k := \log K$ is the $\log$ strike of the caplet.  We have from \eqref{eq:V-explicit} that
\begin{align}
\psi( x )
	&=	( 1 + \tau \ee^k) \Big( \frac{1}{1 + \tau \ee^k} - \ee^{ x } \Big)^+ ,
\end{align}
and thus, from \eqref{eq:ft}, the generalized Fourier transform of $\psi$ is given by
\begin{align}
\psih(\om)
	&=		\frac{ -  \, (1 + \tau \ee^k )^{\ii \om} }{\om^2 + \ii\om } , &
\om_i 
	&> 0 . \label{eq:caplet-ft}
\end{align}
The caplet price $V_t$ can now be computed by inserting the expression \eqref{eq:caplet-ft} for $\psih$ into \eqref{eq:u-integral} and evaluating the integral numerically.
\end{example}

%
%

\section{Relation between QTS models and LSV models}
\label{sec:lsv-connection}
While \eqref{eq:u-def} and \eqref{eq:V-explicit} in conjunction with \eqref{eq:caplet-ft} can be used to compute caplet prices explicitly, the resulting expression tells us very little about the corresponding implied volatilities.  In this section, we will establish a precise relation between QTS models and LSV models.  This relation will be used in subsequent sections to find an explicit approximation for caplet prices and implied volatilities.
\\[0.5em]
We begin by deriving the dynamics of $B^T/M$.  Using \eqref{eq:dM} and \eqref{eq:B-explicit}, we have by It\^o's Lemma that 
\begin{align}
\dd \Big( \frac{B_t^T}{M_t} \Big)
	&= 	 \Big( \frac{B_t^T}{M_t} \Big) \gam^\top(t,Y_t;T) \dd W_t , \label{eq:BoverM} 
\end{align}
where we have introduced the vector-valued function $\gam(\, \cdot \, , \, \cdot \, ;T) : [0,T] \times \Rb^{d \times 1} \to \Rb^{d \times 1}$, which is defined as follows
\begin{align}
\gam(t,Y_t;T) 
	&:=	\Sig^\top \nabla_y \log \Gamma(t,Y_t;T,0,0) \\
	&=	- \Sig^\top \Big( \Gf(t;T) + \Big( \Hf(t;T) + \Hf^\T(t;T) \Big) Y_t \Big). \label{eq:gamma-def} 
\end{align}
Now, let us denote by $\Pbt$ the \textit{$\Tb$-forward probability measure}, whose relation to $\Pb$ is given by the following Radon-Nikodym derivative
\begin{align}
\frac{\dd \Pbt}{\dd \Pb} 
  &:= \frac{M_0 B_\Tb^\Tb}{B_0^\Tb M_\Tb} 
	= \exp \Big( - \frac{1}{2} \int_0^\Tb \| \gam(t,Y_t;\Tb) \|^2 \dd t + \int_0^\Tb \gam^\top(t,Y_t;\Tb) \dd W_t \Big) , \label{eq:measure-change} 
\end{align}
where $\| \gam \|^2 = \gam^\top \gam$.  Observe that the last equality follows from \eqref{eq:BoverM}.  By Girsanov's theorem and \eqref{eq:measure-change}, the process $\Wt = (\Wt_t^{(1)}, \Wt_t^{(2)}, \ldots, \Wt_t^{(d)})_{0 \leq t \leq \Tb}^\top$, defined as follows
\begin{align}
 \Wt_t 
	&:=  - \int_0^t \gam(s,Y_s;\Tb) \dd s + W_t , \label{eq:W_tilde} 
\end{align}
is a $d$-dimensional $(\Pbt,\Fb)$-Brownian motion.  The following lemma will be useful.
\begin{lemma}
\label{lemma:forward-price}
Let $\Pi = (\Pi_t)_{0 \leq t \leq \Tb}$ denotes the value of a self-financing portfolio 
and let $\Pi^\Tb = (\Pi_t^\Tb)_{0 \leq t \leq \Tb}$, 
defined by $\Pi_t^\Tb := \Pi_t/B_t^\Tb$, be the \textit{$\Tb$-forward price of $\Pi$}. Then the process $\Pi^\Tb$ is a $(\Pbt,\Fb)$-martingale.
\end{lemma}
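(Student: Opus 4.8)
The plan is to avoid computing the dynamics of $\Pi^\Tb$ directly and instead combine the no-arbitrage martingale property of discounted self-financing portfolios with the abstract Bayes rule. First I would note that, by the standing no-arbitrage assumption with $M$ as num\'eraire, the discounted portfolio value $\Pi/M$ is a $(\Pb,\Fb)$-martingale (one assumes, as is standard, enough integrability on admissible portfolios that this local martingale is a true martingale).

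Second, I would identify the density process of the measure change. Setting $Z_t := \Eb_t(\dd\Pbt/\dd\Pb)$, equation \eqref{eq:measure-change} together with $B_t^\Tb/M_t = \Eb_t(1/M_\Tb)$ gives
\begin{align}
Z_t &= \frac{M_0}{B_0^\Tb}\,\Eb_t\Big(\frac{1}{M_\Tb}\Big) = \frac{M_0}{B_0^\Tb}\cdot\frac{B_t^\Tb}{M_t},
\end{align}
so $Z$ is a strictly positive $(\Pb,\Fb)$-martingale with $Z_0 = 1$, which is exactly what is needed for \eqref{eq:W_tilde} to define a $\Pbt$-Brownian motion and for Bayes' rule to apply. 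Then, for $0\le t\le s\le \Tb$, I would apply the abstract Bayes formula
\begin{align}
\Ebt_t(X) &= \frac{1}{Z_t}\,\Eb_t(Z_s\,X),\qquad X\in\Fc_s,
\end{align}
to $X = \Pi_s^\Tb = \Pi_s/B_s^\Tb$. The bond price $B_s^\Tb$ in the denominator of $X$ cancels the $B_s^\Tb$ inside $Z_s$, leaving $\Ebt_t(\Pi_s^\Tb) = (M_0/B_0^\Tb)\,Z_t^{-1}\,\Eb_t(\Pi_s/M_s)$. Invoking the $(\Pb,\Fb)$-martingale property of $\Pi/M$ to replace $\Eb_t(\Pi_s/M_s)$ by $\Pi_t/M_t$, and then substituting the explicit expression for $Z_t$ from the display above, every factor of $M_0$, $B_0^\Tb$ and $M_t$ cancels and one is left with $\Pi_t/B_t^\Tb = \Pi_t^\Tb$. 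This is precisely the martingale property claimed.

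I do not expect a genuine obstacle here: this is a textbook change-of-num\'eraire argument. The only point deserving care is the distinction between local and true martingales --- ensuring that $\Pi/M$ (and $Z$) are genuine $\Pb$-martingales rather than merely local ones --- which is subsumed in the no-arbitrage hypothesis or handled by an integrability restriction on admissible portfolios. An alternative, purely pathwise proof would write $\Pi^\Tb = (\Pi/M)\cdot(M/B^\Tb)$, differentiate using It\^o's formula together with \eqref{eq:BoverM} for the dynamics of $B^\Tb/M$, and verify that the Girsanov drift in \eqref{eq:W_tilde} cancels the finite-variation part of $\dd\Pi^\Tb$; this is lengthier but reaches the same conclusion.
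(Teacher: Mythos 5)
Your proposal is correct and follows essentially the same route as the paper: both identify the density process $Z_t = \Eb_t(\dd\Pbt/\dd\Pb) \propto B_t^\Tb/M_t$, apply the abstract Bayes rule (the paper cites Shreve, Lemma 5.2.2), and invoke the $(\Pb,\Fb)$-martingale property of $\Pi/M$ so that the factors of $M$ and $B^\Tb$ cancel to give $\Pi_t^\Tb = \Ebt_t \Pi_T^\Tb$. Your extra remarks on the local-versus-true martingale issue and the alternative pathwise argument are sensible but do not change the argument.
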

\begin{proof}
Define the \textit{Radon-Nikodym derivative process} $Z = (Z_t)_{0 \leq t \leq \Tb}$ by $Z_t := \Eb_t (\dd \Pbt/ \dd \Pb)$.  Using the fact that $\Pi/M$ is a $(\Pb,\Fb)$-martingale as well as \cite[Lemma 5.2.2]{shreve2004stochastic} we have for any $0 \leq t \leq T \leq \Tb$ that
\begin{align}
\frac{\Pi_t}{M_t}
  &=  \Eb_t \Big( \frac{\Pi_T}{M_T} \Big)
  =   Z_t \Ebt_t \Big( \frac{1}{Z_T} \frac{\Pi_T}{M_T} \Big)
  =   \frac{B_t^\Tb}{M_t} \Ebt_t \Big( \frac{M_T}{B_T^\Tb} \frac{\Pi_T}{M_T} \Big) , \label{eq:1}
\end{align}
where $\Ebt_t$ denotes the $\Fc_t$-conditional expectation under $\Pbt$.  Dividing both sides of equation \eqref{eq:1} by $B_t^\Tb$ and canceling common factors of $M_t$ and $M_T$, we obtain
\begin{align}
\Pi_t^\Tb 
  &=  \frac{\Pi_t}{B_t^\Tb}
   =  \Ebt_t \frac{\Pi_T}{B_T^\Tb}
   =  \Ebt_t  \Pi_T^\Tb  ,
\end{align}
which establishes that $\Pi^\Tb$ is a $(\Pbt,\Fb)$-martingale, as claimed.
\end{proof}

\noindent
Note from \eqref{eq:L-def} that $L^{T,\Tb}$ is the $\Tb$-forward price of a static portfolio consisting of $1/\tau$ shares of $B^T$ and $-1/\tau$ shares of $B^\Tb$.  As such, we have from Lemma \ref{lemma:forward-price} that $L^{T,\Tb}$ is a $(\Pbt,\Fb)$-martingale. 
\\[0.5em] 
It will be helpful to write the dynamics of $L^{T,\Tb}$ under the $\Tb$-forward measure $\Pbt$.
Using It\^o's Lemma, \eqref{eq:L-def}, \eqref{eq:BoverM} and \eqref{eq:W_tilde}, we obtain
\begin{align}
\dd L^{T,\Tb}_{t} 
		&=  \frac{1}{\tau } \dd \Big( \frac{B^{T}_t}{B^{\Tb}_t} \Big) 
		=		 \frac{1}{\tau } \dd \Big( \frac{B^{T}_t/M_t}{B^{\Tb}_t/M_t} \Big) \\
		&=	\frac{1}{\tau }  \Big( \frac{B^{T}_t}{B^{\Tb}_t} \Big) \Big( \gam^\top(t,Y_t;T) - \gam^\top(t,Y_t;\Tb) \Big) \dd \Wt_t \\
		&=	 \Big( L^{T,\Tb}_{t}+\frac{1}{\tau } \Big) \Big( \gam^\top(t,Y_t;T) - \gam^\top(t,Y_t;\Tb) \Big) \dd \Wt_t . \label{eq:L-dynamics} 
\end{align} 
Now, let us denote by $X = (X_t)_{0 \leq t \leq T}$ the $\log$ of the simple forward rate from $T$ to $\Tb$, that is
\begin{align}
X_t
  & :=  \log L_t^{T,\Tb} . \label{eq:X-def}
\end{align}
We are now in a position to state the main result of this section.

\begin{proposition}
\label{prop:connection}
As in Section \ref{sec:options-on-forward-rates}, let $V = (V_t)_{0 \leq t \leq T}$ denote the value of a European forward rate option with {reset date} $T$ and {settlement date} $\Tb$ that pays $\phi(\log L^{T,\Tb}_T) = \phi(X_T)$ at time $\Tb$ for some function $\phi : \Rb \to \Rb$. 
Let $V^\Tb = (V_t^\Tb)_{0 \leq t \leq T}$ denote the $\Tb$-forward price of $V$.  Then, there exists a function 
$v(\, \cdot \, ,\, \cdot \, ,\, \cdot \, ;T,\Tb) : [0,T] \times \Rb^{} \times \Rb^{d \times 1} \to \Rb$ such that
\begin{align}
V_t^\Tb
	=		v(t,X_t,Y_t;T,\Tb) .
\end{align}
Moreover, the function $v$ satisfies the following PDE
\begin{align}
( \d_t + \Act(t) ) v(t,\cdot,\cdot;T,\Tb)
	&=	0 , &
v(T,x,y;T,\Tb)
	&=	\phi(x) ,
	\label{eq:v-pde}
\end{align}
where the operator $\Act$ is given by
\begin{align}
\Act(t)
	&=	\frac{1}{2} \Big( 1 + \frac{\ee^{-x}}{\tau} \Big)^2 \| \gam(t,y;T) - \gam(t,y;\Tb) \|^2 (\d_x^2 - \d_x) \\ & \quad
			+ \Big( \lam + \Lam y  + \Sig \, \gam(t,y;\Tb) \Big)^\top \nabla_y + \frac{1}{2} \textup{Tr}(\Sig \Sig^\top \nabla_y \nabla_y^\top ) \\ &\quad
			+ \Big( 1 + \frac{\ee^{-x}}{\tau} \Big) \Big( \Sig \, \gam(t,y;T) - \Sig \, \gam(t,y;\Tb) \Big)^\top \nabla_y \d_x . \label{eq:A-tilde}
\end{align}
\end{proposition}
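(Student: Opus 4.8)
The plan is to show that the pair $(X,Y)$ is a time-inhomogeneous Markov process under the $\Tb$-forward measure $\Pbt$, to represent the forward price $V^\Tb$ as a $\Pbt$-conditional expectation of $\phi(X_T)$, and then to read off the PDE \eqref{eq:v-pde} from the martingale property of $V^\Tb$ together with the generator of $(X,Y)$.

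First I would assemble the joint $(\Pbt,\Fb)$-dynamics of $(X,Y)$. Substituting \eqref{eq:W_tilde} into \eqref{eq:dY} gives $\dd Y_t = \big( \lam + \Lam Y_t + \Sig \, \gam(t,Y_t;\Tb) \big) \, \dd t + \Sig \, \dd \Wt_t$. For $X = \log L^{T,\Tb}$ I would apply It\^o's Lemma to \eqref{eq:L-dynamics}, using the identity $\tfrac{1}{L_t}\big( L_t + \tfrac{1}{\tau} \big) = 1 + \tfrac{\ee^{-X_t}}{\tau}$, to obtain
\begin{align*}
\dd X_t
  &=  -\tfrac12 \Big( 1 + \tfrac{\ee^{-X_t}}{\tau} \Big)^2 \| \gam(t,Y_t;T) - \gam(t,Y_t;\Tb) \|^2 \, \dd t \\
  &\quad + \Big( 1 + \tfrac{\ee^{-X_t}}{\tau} \Big) \big( \gam(t,Y_t;T) - \gam(t,Y_t;\Tb) \big)^\top \dd \Wt_t .
\end{align*}
Since the drift and diffusion coefficients of $(X,Y)$ depend on the path only through $(t,X_t,Y_t)$, the pair $(X,Y)$ is Markov under $\Pbt$.

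Next, Lemma \ref{lemma:forward-price} gives that $V^\Tb$ is a $(\Pbt,\Fb)$-martingale, while \eqref{eq:V=B-phi} gives $V_T^\Tb = V_T/B_T^\Tb = \phi(X_T)$; hence $V_t^\Tb = \Ebt_t \, \phi(X_T)$ for $t \leq T$. By the Markov property just established, this conditional expectation is a deterministic function of $(t,X_t,Y_t)$, which defines $v(\,\cdot\,,\,\cdot\,,\,\cdot\,;T,\Tb)$, and the terminal condition $v(T,x,y;T,\Tb) = \phi(x)$ is immediate. Since $t \mapsto v(t,X_t,Y_t;T,\Tb)$ equals the martingale $V_t^\Tb$, applying It\^o's Lemma and setting the finite-variation part to zero yields $(\d_t + \Act(t)) v = 0$, with $\Act(t)$ the generator of $(X,Y)$ under $\Pbt$. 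It then remains to compute $\Act(t)$ from the dynamics above: the $Y$-block contributes $\big( \lam + \Lam y + \Sig \, \gam(t,y;\Tb) \big)^\top \nabla_y + \tfrac12 \text{Tr}(\Sig\Sig^\top \nabla_y \nabla_y^\top )$; the drift and second-order term of $X$ combine into $\tfrac12 \big( 1 + \tfrac{\ee^{-x}}{\tau} \big)^2 \| \gam(t,y;T) - \gam(t,y;\Tb) \|^2 (\d_x^2 - \d_x)$; and the cross-variation $\dd \langle X, Y^{(i)} \rangle_t = \big( 1 + \tfrac{\ee^{-X_t}}{\tau} \big) \big( \Sig(\gam(t,Y_t;T) - \gam(t,Y_t;\Tb)) \big)_i \, \dd t$ contributes the mixed term $\big( 1 + \tfrac{\ee^{-x}}{\tau} \big) \big( \Sig \, \gam(t,y;T) - \Sig \, \gam(t,y;\Tb) \big)^\top \nabla_y \d_x$. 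Summing these three contributions reproduces \eqref{eq:A-tilde}.

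Regarding difficulty, the argument is formal in the same sense as the derivation of \eqref{eq:Gamma-pde}, so I would not belabor regularity of $v$ or integrability for the conditional-expectation representation. The only step requiring genuine care is the It\^o computation for $X$: correctly simplifying $\tfrac{1}{L_t}\big( L_t + \tfrac1\tau \big)$, handling the quadratic-variation drift so that it merges with the second-order operator into $\d_x^2 - \d_x$, and identifying the $X$--$Y$ cross-variation that produces the $\d_x \nabla_y$ term --- and then bookkeeping the full $(d+1)\times(d+1)$ diffusion matrix so that no term is dropped when assembling $\Act(t)$.
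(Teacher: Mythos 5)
Your proposal is correct and follows essentially the same route as the paper's proof: derive the $(\Pbt,\Fb)$-dynamics of $(X,Y)$ via It\^o's Lemma and the Girsanov shift \eqref{eq:W_tilde}, invoke Lemma \ref{lemma:forward-price} and \eqref{eq:V=B-phi} to write $V_t^\Tb = \Ebt_t\,\phi(X_T)$, and read off the Kolmogorov backward PDE with $\Act(t)$ the generator of $(X,Y)$ under $\Pbt$. Your explicit bookkeeping of the three generator contributions (the $x$-block, $y$-block, and cross term) just spells out what the paper states when it asserts the generator is \eqref{eq:A-tilde}.
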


\begin{proof}
We begin by writing the dynamics of $X$ and $Y$ under the $\Tb$-forward probability measure $\Pbt$.
First, using It\^o's Lemma and \eqref{eq:L-dynamics}, we obtain
\begin{align}
\dd X_t
	&=	- \frac{1}{2} \Big( 1 + \frac{\ee^{-X_t}}{\tau} \Big)^2 \| \gam(t,Y_t;T) - \gam(t,Y_t;\Tb) \|^2 \dd t \\ &\quad
			+ \Big( 1 + \frac{\ee^{-X_t}}{\tau} \Big) \Big( \gam^\top(t,Y_t;T) - \gam^\top(t,Y_t;\Tb) \Big) \dd \Wt_t .
\end{align}
Next, using \eqref{eq:dY} and \eqref{eq:W_tilde}, we find
\begin{align}
\dd Y_t
	&=	\Big( \lam + \Lam \, Y_t  + \Sig \, \gam(t,Y_t;\Tb) \Big) \dd t + \Sig \, \dd \Wt_t . \label{eq:dY-forward}
\end{align}
The pair $(X,Y)$ is a Markov process whose generator $\Act$ under $\Pbt$ is given by \eqref{eq:A-tilde}.
Now, using the fact that $\Tb$-forward prices are $(\Pbt,\Fb)$-martingales and the fact that the process $(X,Y)$ is Markov, there exists a function $v$ such that
\begin{align}
V_t^\Tb
	&= 	\frac{V_t}{B_t^\Tb}
	=		\Ebt_t \frac{V_T}{B_T^\Tb}
	=		\Ebt_t \phi(X_T) 
	=		v(t,X_t,Y_t;T,\Tb) , \label{eq:v-def} 
\end{align}
where, in the third equality, we have used \eqref{eq:V=B-phi}.  We have from \eqref{eq:v-def} that the function $v$ satisfies the Kolmogorov backward PDE \eqref{eq:v-pde}.
\end{proof}

\noindent
A few important remarks are in order.

\begin{remark}
\label{rmk:lsv}
As $L^{T,\Tb} = \ee^X$ is a positive $(\Pbt,\Fb)$-martingale, the process $(X,Y)$ can be seen as a classical LSV model, where $Y$ represents non-local factors of volatility.  For example, when $d=1$ we have from \eqref{eq:A-tilde} that
\begin{align}
\Act(t)	
	&=	c(t,x,y) (\d_x^2 - \d_x ) + f(t,x,y) \d_y + g(t,x,y) \d_y^2  + h(t,x,y) \d_x \d_y, \label{eq:A-1d}
\end{align}
where the functions $c$, $f$, $g$ and $h$ are given by 
\begin{align}
c(t,x,y)
	&= \tfrac{1}{2} \Sig^2 \Big( 1 + \frac{\ee^{-x}}{\tau} \Big)^2 \Big( \Gf(t;\Tb) - \Gf(t;T) + 2 \Big( \Hf(t;\Tb) - \Hf(t;T) \Big) y \Big)^2, \\
f(t,x,y)
	& = \lam + \Lam y - \Sig^2\Big( \Gf(t;\Tb) +2 \Hf(t;\Tb) y \Big), \\
g(t,x,y)
	&=	\tfrac{1}{2} \Sig^2 , \\
h(t,x,y)
	&=	\Sig^2 \Big( 1 + \frac{\ee^{-x}}{\tau} \Big) \Big( \Gf(t;\Tb) - \Gf(t;T) + 2 \Big( \Hf(t;\Tb) - \Hf(t;T) \Big) y \Big) .
\end{align}
\end{remark}

\begin{remark}
\label{remark:elliptic}
The instantaneous covariance matrix of the process $(X,Y)$ is singular due to the fact that $X_t$ can be written as an explicit function of $t$ and $Y_t$.  Indeed, using \eqref{eq:B-explicit}, \eqref{eq:L-def} and \eqref{eq:X-def}, we have
\begin{align}
X_t
	&=	\log \Big[ \frac{1}{\tau} \Big( \ee^{ \Ff(t;\Tb) - \Ff(t;T) +  \Big( \Gf^\T(t;\Tb) - \Gf^\T(t;T) \Big) Y_t 
			+ Y^{\T}_t \Big( \Hf(t;\Tb) - \Hf(t;T) \Big) Y_t} - 1 \Big) \Big] 
	=:	\xi(t,Y_t;T,\Tb) . \label{eq:xi-def}
\end{align}
As a result, the generator $\Act$ defined in \eqref{eq:A-tilde} is not uniformly elliptic.  The setting here is similar to the settings in \cite{LETF} and \cite{VIX} where the authors use the approximation methods described in Sections \ref{sec:price-asymptotics} and \ref{sec:imp-vol} of this paper to find explicit approximations of implied volatility for options on leveraged exchange traded funds and the VIX, respectively.  As the authors of those papers point out, the lack of a uniformly elliptic generator does \textit{not} complicate the construction of a formal implied volatility approximation.
\end{remark}

\begin{remark}
Let $\Yt^{(-j)} := (Y_t^{(1)}, \ldots, Y_t^{(j-1)},Y_t^{(j+1)}, \ldots, Y_t^{(d)})_{0 \leq t \leq T}^\top$.  Had the function $\xi(t,y;T,\Tb)$ defined in \eqref{eq:xi-def} been invertible with respect to $y_j$ for some $j \in \{1,2,\ldots,d\}$, we could have written $Y_t^{(j)} = \xi_j^{-1}(t,X_t,\Yt_t^{(-j)};T,\Tb)$ where $\xi_j^{-1}$ is the inverse of $\xi$ with respect to $y_j$.  The process $(X,\Yt^{(-j)})$ would have been a $d$-dimensional Markov process with a non-singular instantaneous covariance matrix and thus, a uniformly elliptic generator.  This was the approach taken in \cite{lorig2022options}, where the authors found explicit approximations of implied volatilities for options on bonds in an \textit{affine} (as opposed to \textit{quadratic}) term-structure setting.
\end{remark}

\begin{remark}
Clearly, because $V_t^\Tb = V_t/B_t^\Tb$, we have from \eqref{eq:B-explicit} and \eqref{eq:V-explicit} that
\begin{align}
v(t,\xi(t,y;T;\Tb),y;T,\Tb)
	&=	\frac{u(t,y;T,\Tb)}{\Gam(t,y;\Tb,0,0)} . \label{eq:v-explicit}
\end{align}
However, as mentioned previously, the explicit expression \eqref{eq:v-explicit} for $v$ does not tell us anything about implied volatilities of caplets, which is the aim of this paper.
\end{remark}

%
%

\section{Option price asymptotics}
\label{sec:price-asymptotics}
Let $z := (x, y_1, \ldots, y_d)$.  We have from \eqref{eq:v-pde} that the function $v$ satisfies a parabolic PDE of the form
\begin{align}
( \d_t + \Act(t) ) v(t, \,\cdot \,)
	&=	0 , &
\Act(t)
	&=	\sum_{|\alpha| \leq 2} a_\alpha(t,z) \d_z^\alpha , &
v(T, \, \cdot \,)
	&=	\phi , \label{eq:pde-form}
\end{align}
where, for brevity, we have omitted the dependence on $T$ and $\Tb$.  Note that we have introduced standard multi-index notation
\begin{align}
\alpha
	&=	(\alpha_1, \alpha_2, \dots, \alpha_{d+1}) , &
\d_z^\alpha
	&=	\prod_{i=1}^{d+1} \d_{z_i}^{\alpha_i} , &
z^\alpha
	&=	\prod_{i=1}^{d+1} {z_i}^{\alpha_i} , &
| \alpha |
	&=	\sum_{i=1}^{d+1} \alpha_i , &
\alpha!
	&=	\prod_{i=1}^{d+1} \alpha_i! .
\end{align}
In general, there is no explicit solution to PDEs of the form \eqref{eq:pde-form}.  In this section, we will show in a formal manner how an explicit approximation of $v$ can be obtained by using a simple Taylor series expansion of the coefficients $(a_\alpha)_{|\alpha|\leq 2}$ of $\Act$.  The method described below was introduced for scalar diffusions in \cite{pagliarani2011analytical} and subsequently extended to $d$-dimensional diffusions in \cite{lorig-pagliarani-pascucci-2,lorig-pagliarani-pascucci-4}.  
\\[0.5em]
To begin, for any $\eps \in [0,1]$ and $\zb: [0,T] \to \Rb^{d+1}$, let $v^\eps$ be the unique classical solution to
\begin{align}
0
	&=	( \d_t + \Act^\eps(t) ) v^\eps(t, \,\cdot \,) , &
v^\eps(T, \, \cdot \,)
	&=	\phi , \label{eq:v-eps-pde} 
\end{align}
where the operator $\Act^\eps$ is defined as follows
\begin{align}
\Act^\eps(t)
	&:=	\sum_{|\alpha| \leq 2}  a_\alpha^\eps(t,z) \d_z^\alpha , &
	&\text{with}&
a_\alpha^\eps
	&:=	a_\alpha(t,\zb(t) + \eps(z - \zb(t))) , \label{eq:A-eps}
\end{align}
Observe that $\Act^\eps |_{\eps = 1} = \Act$ and thus $v^\eps |_{\eps=1} = v$.  We will seek an approximate solution of \eqref{eq:v-eps-pde} by expanding $v^\eps$ and $\Act^\eps$ in powers of $\eps$.  Our approximation for $v$ will be obtained by setting $\eps = 1$ in our approximation for $v^\eps$.   We have
\begin{align}
v^\eps
	&=	\sum_{n=0}^\infty \eps^n v_n , &
\Act^\eps(t)
	&=	\sum_{n=0}^\infty \eps^n \Act_n(t) , \label{eq:expansion}
\end{align}
where the functions $(v_n)_{n \geq 0}$ are, at the moment, unknown, and the operators $(\Act_n)_{n \geq 0}$ are given by
\begin{align}
\Act_n(t)
	&=	\frac{\dd^n }{\dd \eps^n} \Act^\eps |_{\eps=0}
	=		\sum_{|\alpha| \leq 2} a_{\alpha,n}(t,z) \d_z^\alpha , &
a_{\alpha,n}
	=		\sum_{|\eta|=n} \frac{1}{\eta!} (z - \zb(t))^\eta \d_z^\eta a_\alpha(t,\zb(t))  . \label{eq:an-taylor}
\end{align}
Note that $a_{\alpha,n}(t, \, \cdot \,)$ is the sum of the $n$th order terms in the Taylor series expansion of $a_\alpha(t,\,\cdot\,)$ about the point $\zb(t)$.  Inserting the expansions from \eqref{eq:expansion} for $v^\eps$ and $\Act^\eps$ into PDE \eqref{eq:v-eps-pde} and collecting terms of like order in $\eps$ we obtain
\begin{align}
&\Oc(\eps^0):&
0
	&=	( \d_t + \Act_0(t) ) v_0(t, \,\cdot \,) , &
v_0(T, \, \cdot \,)
	&=	\phi  , \label{eq:v0-pde} \\
&\Oc(\eps^n):&
0
	&=	( \d_t + \Act_0(t) ) v_n(t, \,\cdot \,)  + \sum_{k=1}^n \Act_k(t) v_{n-k}(t,\,\cdot\,) , &
v_n(T, \, \cdot \,)
	&=	0  . \label{eq:vn-pde}
\end{align}
Now, observe that the coefficients $(a_{\alpha,0})_{|\alpha|\leq 2}$ of $\Act_0$ do not depend on $z$.  Thus, $\Act_0$ is the generator of a $(d+1)$-dimensional Brownian motion with a time-dependent drift vector and covariance matrix.  As such, $v_0$ is given by
\begin{align}
v_0(t,z)
	&=	\Pc_0(t,T)\phi(z)
	=		\int_{\Rb^{d+1}} \dd z' \, p_0(t,z;T,z') \phi(z') . \label{eq:v0-explicit}
\end{align}
where $\Pc_0$ is the semigroup generated by $\Act_0$ and $p_0$ is the associated transition density (i.e., the solution to \eqref{eq:v0-pde} with $\phi = \del_{\zeta}$).  Explicitly, we have
\begin{align}
p_0(t,z;T,\zeta)
	&=	\frac{1}{\sqrt{(2\pi)^{d+1}|\Cv(t,T)|}}
			{\exp\left(-\tfrac{1}{2} (\zeta-z-\mv(t,T))^\top \Cv^{-1}(t,T) (\zeta-z-\mv(t,T)) \right)} , \label{eq:p0}
\end{align}
where $\mv$ and $\Cv$ are given by
\begin{align}
\mv(t,T)
		&:=	\int_t^T \dd s \, m(s) , &
\Cv(t,T)
		&:=	\int_t^T \dd s \, A(s) , \label{eq:m-and-C}
\end{align}
and $m$ and $A$ are, respectively, the instantaneous drift vector and covariance matrices
\begin{align}
m(s)
		&:=	\begin{pmatrix}
				a_{(1,0,\cdots,0),0}(s) \\ a_{(0,1,\cdots,0),0}(s) \\ \vdots \\  a_{(0,0,\cdots,1),0}(s)
				\end{pmatrix} , &
A(s)
		&:= \begin{pmatrix}
				2a_{(2,0,\cdots,0),0}(s) & a_{(1,1,\cdots,0),0}(s) & \ldots &  {a_{(1,0,\cdots,1),0}(s)} \\
				a_{(1,1,\cdots,0),0}(s) & 2a_{(0,2,\cdots,0),0}(s) & \ldots &  a_{(0,1,\cdots,1),0}(s) \\
				\vdots & \vdots & \ddots & \vdots \\
				a_{(1,0,\cdots,1),0}(s) & a_{(0,1,\cdots,1),0}(s) & \ldots &  2 a_{(0,0,\cdots,2),0}(s) \\
				\end{pmatrix} .
\end{align}
By Duhamel's principle, the solution $v_n$ of \eqref{eq:vn-pde} is
\begin{align}
v_n(t,z)
	&=	\sum_{k=1}^n \int_t^T \dd t_1 \, \Pc_0(t,t_1) \Act_k(t_1) v_{n-k}(t_1,z) \\
	&=	\sum_{k=1}^n \sum_{i \in I_{n,k}}
      \int_{t}^T \dd t_1 \int_{t_1}^T \dd t_2 \cdots \int_{t_{k-1}}^T \dd t_k \\ & \qquad 
       \Pc_0(t,t_1) \Ac_{i_1}(t_1)
       \Pc_0(t_1,t_2) \Ac_{i_2}(t_2) \cdots
       \Pc_0(t_{k-1},t_k) \Ac_{i_k}(t_k)
       \Pc_0(t_k,T)\phi(z) , \label{eq:vn-explicit} \\
I_{n,k}
    &= \{ i = (i_1, i_2, \cdots , i_k ) \in \mathds{N}^k : i_1 + i_2 + \cdots + i_k = n \} .
            \label{eq:Ink}
\end{align}
While the expression \eqref{eq:vn-explicit} for $v_n$ is explicit, it is not easy to compute as written because operating on a function with $\Pc_0$ requires performing a $(d+1)$-dimensional integral.  The following proposition establishes that $v_n$ can be expressed as a differential operator acting on $v_0$.

\begin{proposition}
\label{thm:vn}
The solution $v_n$ of PDE \eqref{eq:vn-pde} is given by
\begin{align}
v_n(t,z)
    &=  \Lc_n(t,T) v_0(t,z) , \label{eq:un} 
\end{align}
where $\Lc$ is a linear differential operator, which is given by
\begin{align}
\Lc_n(t,T)
    &=  \sum_{k=1}^n \sum_{i \in I_{n,k}}
        \int_{t}^T \dd t_1 \int_{t_1}^T \dd t_2 \cdots \int_{t_{k-1}}^T \dd t_k
        \Gc_{i_1}(t,t_1)
        \Gc_{i_2}(t,t_2) \cdots
         \Gc_{i_k}(t,t_k) , \label{eq:Ln}
\end{align}
the index set $I_{n,k}$ is as defined in \eqref{eq:Ink} and the operator $\Gc_i$ is given by
\begin{align}
\Gc_i(t,t_k) 
	&:= 	\sum_{|\alpha |\leq 2}  a_{\alpha,i}(t_k,\Zc(t,t_k)) \d_z^\alpha , &
\Zc(t,t_k)
  &:= z + \mv(t,t_k) + \Cv(t,t_k) \nabla_z . 	\label{eq:Gc.def}
\end{align}
\end{proposition}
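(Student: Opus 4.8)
The plan is to collapse the iterated-integral representation \eqref{eq:vn-explicit} of $v_n$ into a single differential operator acting on $v_0 = \Pc_0(t,T)\phi$, by repeatedly commuting the Gaussian semigroup $\Pc_0$ past the polynomial-coefficient operators $\Act_{i_j}$. The cornerstone is an elementary ``dual-variable'' identity for the Gaussian kernel $p_0$ of \eqref{eq:p0}: since $p_0(t,z;s,\zeta)$ depends on $(z,\zeta)$ only through $\zeta - z - \mv(t,s)$, differentiating in $z$ gives $\Cv(t,s)\,\nabla_z\, p_0(t,z;s,\zeta) = (\zeta - z - \mv(t,s))\, p_0(t,z;s,\zeta)$, hence
\begin{align}
\zeta_i\, p_0(t,z;s,\zeta) = \Zc_i(t,s)\, p_0(t,z;s,\zeta) , \qquad \Zc(t,s) = z + \mv(t,s) + \Cv(t,s)\,\nabla_z ,
\end{align}
with $\Zc_i(t,s)$ acting in the backward variable $z$. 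Because $\Cv$ is symmetric the components of $\Zc(t,s)$ mutually commute, so $a(\zeta)\,p_0(t,z;s,\zeta) = a(\Zc(t,s))\,p_0(t,z;s,\zeta)$ for every polynomial $a$, with no ordering ambiguity. First I would record this, together with the fact that $\Pc_0$ commutes with $\d_z^\alpha$ (again because $p_0$ is a function of $\zeta - z$, via integration by parts against the rapidly decaying Gaussian).

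Second, I would combine these observations into the commutation relation that does all the work: each Taylor coefficient $a_{\alpha,i}(s,\,\cdot\,)$ of \eqref{eq:an-taylor} is a polynomial in $z$ (it is the homogeneous degree-$i$ part of $a_\alpha$ about $\zb(s)$), so writing $\Pc_0(t,s)$ as the $\zeta$-integral against $p_0$, replacing $a_{\alpha,i}(s,\zeta)$ by $a_{\alpha,i}(s,\Zc(t,s))$, pulling the $z$-variable operator out of the integral, and using that $\Pc_0$ commutes with $\d_z^\alpha$, one obtains
\begin{align}
\Pc_0(t,s)\,\Act_i(s)\,g = \sum_{|\alpha|\leq 2} a_{\alpha,i}\big(s,\Zc(t,s)\big)\,\d_z^\alpha\,\Pc_0(t,s)\,g = \Gc_i(t,s)\,\Pc_0(t,s)\,g ,
\end{align}
with $\Gc_i$ exactly as in \eqref{eq:Gc.def}.

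Third, I would substitute this into \eqref{eq:vn-explicit} and induct on the number $k$ of interior operators. Working from the left, after $j-1$ steps the accumulated Gaussian semigroup runs from $t$ to $t_j$; the $j$-th commutation then replaces $\Pc_0(t,t_j)\,\Act_{i_j}(t_j)$ by $\Gc_{i_j}(t,t_j)\,\Pc_0(t,t_j)$, and the semigroup law $\Pc_0(t,t_j)\,\Pc_0(t_j,t_{j+1}) = \Pc_0(t,t_{j+1})$ merges the two adjacent semigroups so the induction continues. After $k$ steps the string
\begin{align}
\Pc_0(t,t_1)\,\Act_{i_1}(t_1)\,\Pc_0(t_1,t_2)\,\Act_{i_2}(t_2)\cdots\Pc_0(t_{k-1},t_k)\,\Act_{i_k}(t_k)\,\Pc_0(t_k,T)\,\phi
\end{align}
has collapsed to $\Gc_{i_1}(t,t_1)\,\Gc_{i_2}(t,t_2)\cdots\Gc_{i_k}(t,t_k)\,\Pc_0(t,T)\,\phi = \Gc_{i_1}(t,t_1)\cdots\Gc_{i_k}(t,t_k)\,v_0(t,z)$. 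Summing over $k$ and over $i\in I_{n,k}$ and carrying the time integrals through gives $v_n(t,z) = \Lc_n(t,T)\,v_0(t,z)$ with $\Lc_n$ as in \eqref{eq:Ln}, which is the claim.

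The one delicate point is the bookkeeping of time arguments in the induction: one must check that after each commutation the left endpoint of the surviving semigroup is always the original $t$, which is exactly what the telescoping of semigroups produces and what makes every factor coincide with $\Gc_{i_j}(t,t_j)$ as defined in \eqref{eq:Gc.def}. The remaining ingredients — differentiation under the integral sign, the integration by parts moving $\d_z$ through $\Pc_0$, and the remark that $a_{\alpha,i}(s,\Zc(t,s))$ is a genuine finite-order differential operator since $a_{\alpha,i}$ is a polynomial of degree $i$ — are routine, and consistent with the formal level of this section I would not dwell on convergence of the underlying $\eps$-expansion.
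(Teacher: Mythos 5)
Your proposal is correct and follows essentially the same route as the paper: the commutation identity $\Pc_0(t,s)\,\Act_i(s) = \Gc_i(t,s)\,\Pc_0(t,s)$ together with the semigroup property $\Pc_0(t_1,t_2)\Pc_0(t_2,t_3)=\Pc_0(t_1,t_3)$, applied left-to-right to telescope the Duhamel string in \eqref{eq:vn-explicit} into $\Gc_{i_1}(t,t_1)\cdots\Gc_{i_k}(t,t_k)\,v_0$. The only difference is that you derive the key identity yourself from the Gaussian-kernel relation $\zeta\,p_0 = \Zc(t,s)\,p_0$ and integration by parts, whereas the paper simply cites it from \cite[Theorem 2.6]{lorig-pagliarani-pascucci-2}; your derivation is exactly the content of that cited result.
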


\begin{proof}
The proof, which is given in \cite[Theorem 2.6]{lorig-pagliarani-pascucci-2}, relies on the fact that, for any $0 \leq t \leq t_k < \infty$ the operator $\Gc_i$ in \eqref{eq:Gc.def} satisfies
\begin{align}
\Pc_0(t,t_k) \Ac_{i}(t_k)
    &=  \Gc_i(t,t_k) \Pc_0(t,t_k) . \label{eq:PA=GP}
\end{align}
Using \eqref{eq:PA=GP}, as well as the semigroup property ${\Pc_0}(t_1,t_2) {\Pc_0}(t_2,t_3) = {\Pc_0}(t_1,{t_3})$, we have that
\begin{align}
&\Pc_0(t,t_1) \Ac_{i_1}(t_1) \Pc_0(t_1,t_2) \Ac_{i_2}(t_2) \cdots \Pc_0(t_{k-1},t_k) \Ac_{i_k}(t_k) \Pc_0(t_k,T) \phi(z) \\
	&=	\Gc_{i_1}(t,t_1) \Gc_{i_2}(t,t_2) \cdots \Gc_{i_k}(t,t_k)
       \Pc_0(t,t_1) \Pc_0(t_1,t_2) \cdots \Pc_0(t_{k-1},t_k) \Pc_0(t_k,T) \phi \\
	&=	\Gc_{i_1}(t,t_1) \Gc_{i_2}(t,t_2) \cdots \Gc_{i_k}(t,t_k) \Pc_0(t,T) \phi \\
	&=	\Gc_{i_1}(t,t_1) \Gc_{i_2}(t,t_2) \cdots \Gc_{i_k}(t,t_k) v_0(t,\,\cdot\,) , \label{eq:result}
\end{align}
where, in the last equality we have used $\Pc_0(t,T) \phi = v_0(t,\,\cdot\,)$.  Inserting \eqref{eq:result} into \eqref{eq:vn-explicit} yields \eqref{eq:un}.
\end{proof}

\noindent
Having obtained expressions for the functions $(v_n)_{n \geq 0}$ as differential operators $(\Lc_n)_{n \geq 0}$ acting on $v_0$, we define $\vb$, the \textit{$n$th order approximation of $v$}, as follows
\begin{align}
\vb_n
	:= \sum_{k=0}^n {v_k} . \label{eq:def-vbar}
\end{align}
Note that $\vb_n$ depends on the choice of $\zb$.  In general, if one is interested in the value of $v(t,z)$ a good choice for $\zb$ is $\zb(t) = z$.

%
%

\section{Implied volatility asymptotics}
\label{sec:imp-vol}
In this section, we show how to translate the price approximation developed in Section \ref{sec:price-asymptotics} into an approximation of Black implied volatilities associated with caplets.  The derivation below closely follows the derivation in \cite{lorig-pagliarani-pascucci-2}, where the authors develop an approximation for Black-Scholes implied volatilities associated with call options on equity.
\\[0.5em]
Throughout this section, we fix a QTS model \eqref{eq:dY}-\eqref{eq:r-def}, an initial date $t$, a reset date $T > t$, a settlement date $\Tb > T$, the initial values $(X_t = \log L_t^{T,\Tb}, Y_t) = (x, y)$ and a caplet payoff $\phi(X_T) = \tau (\ee^{X_T} - \ee^k)^+$. Our goal is to find an approximation of implied volatility for \textit{this particular caplet}.  To ease notation, we will sometimes hide the dependence on $(t, x, y; T, \Tb, k)$.  However, the reader should keep in mind that the implied volatility of the caplet under consideration does depend on $(t, x, y; T, \Tb, k)$, even if this is not explicitly indicated.  Below, we remind the reader of the \textit{Black model} and provide definitions of the \textit{Black price} and \textit{Black implied volatility}, which will be used throughout this section.
\\[0.5em]
In the \textit{Black model}, the dynamics of the simple forward rate $L^{T,\Tb} = \ee^X$ are given by
\begin{align}
\dd L^{T,\Tb}_t
	&=	\sig L^{T,\Tb}_t \dd \Wt_t , &
	&\text{and thus}&
\dd X_t
	&=	-\tfrac{1}{2}\sig^2 \dd t + \sig \dd \Wt_t , \label{eq:black-model}
\end{align}
where $\sig > 0$ is the \textit{Black volatility} and $\Wt$ is a scalar $(\Pbt,\Fb)$-Brownian motion. Equation \eqref{eq:black-model}  leads to the following definitions.

\begin{definition}
The $\Tb$-forward \textit{Black price} of a caplet, denoted $v^\BS$, is defined as follows
\begin{align}
v^\BS(t,x;T,\Tb,k,\sig)
	&:=	\tau \, \Ebt[  ( \ee^{X_T} - \ee^k )^+ | X_t = x ]
	=		\tau \, \Big( \ee^x \Phi(d_+) - \ee^{k } \Phi(d_-) \Big) , \label{eq:black-price}
\end{align}
where the dynamics of $X$ are given by \eqref{eq:black-model} and
\begin{align}
d_\pm
   &:= \frac{1}{\sig \sqrt{T-t}} \left(x-k  \pm \frac{\sig^2 (T-t)}{2}  \right) , &
\Phi(d)
  &:=  \int_{-\infty}^d \dd x \, \frac{1}{\sqrt{2\pi}} \ee^{-x^2/2}.
\end{align}
\end{definition}

\begin{definition}
The \textit{Black implied volatility} corresponding to the $\Tb$-forward price $v$ of a caplet is the unique positive solution $\sig$ of the equation
\begin{align}
v^{\BS}(t,x;T,\Tb,k,\sig) 
	&= v . \label{eq:iv-def}
\end{align}
where the Black price $v^\BS$ is given by \eqref{eq:black-price}.
\end{definition}

\noindent
Now, suppose that $v \equiv v(t,x,y;T,\Tb,k)$ is the $\Tb$-forward price of a caplet corresponding to a QTS model, where we have now indicated the dependence on the $\log$ strike $k$ explicitly.
As in Section \ref{sec:price-asymptotics}, we will seek an approximation of the implied volatility $\sig^\eps$ corresponding to $v^\eps$ by expanding $\sig^\eps$ in power of $\eps$.  Our approximation {of} $\sig$ will then be obtained by setting $\eps = 1$.  We have
\begin{align}
\sig^\eps
  &=  \sig_0 + \del \sig^\eps , &
\del \sig^\eps
  &=  \sum_{n=1}^\infty \eps^n \sig_n , \label{eq:I-expand}
\end{align}
where $(\sig_n)_{n \geq 0}$ are, at the moment, unknown.
Expanding the Black price $v^\BS(\sig^\eps)$ in powers of $\eps$ we obtain
\begin{align}
v^\BS(\sig^\eps)
  &=  v^\BS(\sig_0 + \del \sig^\eps) \\
  &=    \sum_{k=0}^\infty \frac{1}{k!}(\del \sig^\eps \d_\sig )^k v^\BS(\sig_0) \\
  &=    v^\BS(\sig_0) + 
            \sum_{k=1}^\infty \frac{1}{k!}  
            \sum_{n=1}^\infty \eps^n \sum_{I_{n,k}} \Big( \prod_{j=1}^k \sig_{i_j} \Big) \d_\sig^k v^\BS(\sig_0) \\
  &=    v^\BS(\sig_0) + 
            \sum_{n=1}^\infty \eps^n  \sum_{k=1}^\infty \frac{1}{k!}  
            \sum_{ I_{n,k}} \Big( \prod_{j=1}^k \sig_{i_j} \Big) \d_\sig^k v^\BS(\sig_0) \\
  &=    v^\BS(\sig_0) + 
            \sum_{n=1}^\infty \eps^n \bigg( \sig_n \d_\sig + \sum_{k=2}^\infty \frac{1}{k!}  
            \sum_{ I_{n,k}} \Big( \prod_{j=1}^k \sig_{i_j} \Big) \d_\sig^k  \bigg) v^\BS(\sig_0) ,
\end{align}
where $I_{n,k}$ is given by \eqref{eq:Ink}.  Inserting the expansions for $v^\eps$ and $v^\BS(\sig^\eps)$ into the equation $v^\eps = v^\BS(\sig^\eps)$ and collecting terms of like order in $\eps$ we obtain
\begin{align}
&\Oc(\eps^0)&
v_0
  &=  v^\BS(\sig_0) , \label{eq:v0=expression} \\
&\Oc(\eps^n)&
v_n
  &=  \bigg( \sig_n \d_\sig + \sum_{k=2}^\infty \frac{1}{k!}  \sum_{ I_{n,k}} \Big( \prod_{j=1}^k \sig_{i_j} \Big) \d_\sig^k  \bigg) v^\BS(\sig_0) . \label{eq:vn=expression}
\end{align}
Now, from \eqref{eq:v0-explicit} and \eqref{eq:black-price} we have
\begin{align}
v_0
  &=  v^\BS \left( \sqrt{ \Cv_{1,1}(t,T)/(T-t) } \right) ,
\end{align}
where $\Cv$ is defined in \eqref{eq:m-and-C}.  Thus, it follows from \eqref{eq:v0=expression} that
\begin{align}
\sig_0
  &=  \sqrt{ \Cv_{1,1}(t,T)/(T-t) } .\label{eq:sig-0}
\end{align}
Having identified $\sig_0$, we can use \eqref{eq:vn=expression} to obtain $\sig_n$ recursively for every $n \geq 1$.  We have
\begin{align}
\sig_n
  &=  \frac{1}{\d_\sig v^\BS(\sig_0)} \bigg( v_n - \sum_{k=2}^\infty \frac{1}{k!}  \sum_{ I_{n,k}} \Big( \prod_{j=1}^k \sig_{i_j} \Big) \d_\sig^k v^\BS(\sig_0) \bigg) . \label{eq:sig-n}
\end{align}
Using the expression given in \eqref{eq:un} for $v_n$, one can show that $\sig_n$ is an $n$th order polynomial in $\log$-moneyness $k-x$ with coefficients that depend on $(t,T)$; see \cite[Section 3]{lorig-pagliarani-pascucci-2} for details.  We provide explicit expressions for $\sig_0$, $\sig_1$, and $\sig_2$ for $d = 1$ in Appendix \ref{sec:explicit-expressions}.
\\[0.5em]
We now define our $n$\textit{th order approximation of implied volatility} as
\begin{align}
\sigb_n
	&:=	\sum_{j=0}^n \sig_j , \label{eq:sigma-bar}
\end{align}
where $\sig_j$ is given by \eqref{eq:sig-n}.  
If we set $\zb(t) = (x,y)$ in the price approximation, then the corresponding implied volatility approximation \eqref{eq:sigma-bar} satisfies the following asymptotic accuracy result
\begin{align}
|\sig(t,x,y;T,\Tb,k)-\sigb_n(t,x,y;T,\Tb,k)| 
	&= \Oc((T-t)^{(n+1)/2}), &
		&\text{as}&
(T,k) \to (t,x), \label{eq:accuracy}
\end{align}
within the parabolic region $\{(T,k) : |k-x| \leq \ell \sqrt{T-t} \}$ for some $\ell > 0$. The proof of \eqref{eq:accuracy} is a direct consequence of \cite[Theorem 3.10]{VIX}.

%
%

\section{Numerical example: Quadratic Ornstein-Uhlenbeck model}
\label{sec:examples} 
Throughout this section, we consider a QTS model, whose dynamics are as follows
\begin{align}
\dd Y_t
  &=  \kappa ( \theta - Y_t) \dd t + \del \dd W_t , &
R_t = r(Y_t)
  &=  q + Y_t^2, \label{eq:quadratic-vasicek-def}
\end{align}
where the constants {$\kappa,\del$ are positive and $\theta,q$ are nonnegative}. 
Noting that $Y$ is an Ornstein-Uhlenbeck process,  
we refer to the model \eqref{eq:quadratic-vasicek-def} as the \textit{Quadratic Ornstein-Uhlenbeck} (QOU) model. 

\begin{remark}
If we consider the special case $\theta = q = 0$.
then we have by It\^o's lemma that
\begin{align}
\dd R_t 
  & = 2\kappa \Big( \frac{\del^2}{2\kappa}- R_t \Big) \dd t + 2\del \sqrt{R_t} \dd W_t . \label{eq:cir}
\end{align}
Note that \eqref{eq:cir} is a \textit{Cox-Ingersoll-Ross} (CIR) process with a mean $\frac{\del^2}{2\kappa}$, rate of mean-reversion $2 \kappa$ and volatility $2 \del$.  Thus, the QOU model contains as a special case, some (but not all) CIR short-rate models.
\end{remark}

\noindent
Comparing \eqref{eq:quadratic-vasicek-def} with \eqref{eq:dY} and \eqref{eq:r-def} we obtain  
\begin{align}
     \lam & = \kappa \theta, & \Lam & = -\kappa, & \Sig & = \delta, & \Xi &= 1.
\end{align}
Next, we can obtain from \eqref{eq:Fode}, \eqref{eq:Gode}, and \eqref{eq:Hode} that $(F,G,H)$ satisfies the following system of ODEs
\begin{align}
\left.
\begin{aligned}
\d_t F(t;T,\nu,\Omega) & = \tfrac{1}{2}\del^2 G^{2}(t;T,\nu,\Omega)-\del^2H(t;T,\nu,\Omega)-\kappa \theta G(t;T,\nu,\Omega)-q,
&
F(T;T,\nu,\Omega) & = 0 ,
\\
\d_t G(t;T,\nu,\Omega) & = \big(2\del^2 H(t;T,\nu,\Omega)+ \kappa\big) G(t;T,\nu,\Omega)-2\kappa\theta H(t;T,\nu,\Omega) ,
&
G(T;T,\nu,\Omega) & = -\nu ,
\\
\d_t H(t;T,\nu,\Omega) & =  2\del^2 H^2(t;T,\nu,\Omega)+2\kappa H(t;T,\nu,\Omega) - 1 ,
&
H(T;T,\nu,\Omega) & = -\Omega .
\end{aligned}
\right \}
\label{eq:qts-vasicek-ode}
\end{align}
Solving \eqref{eq:qts-vasicek-ode}, we obtain
\begin{align}
    F(t;T,\nu,\Omega) & = \int_{T}^{t} \dd s \, \Big( \tfrac{1}{2}\del^2 G^2(s;T,\nu,\Omega) -\del^2 H(s;T,\nu,\Omega) -\kappa\theta G(s;T,\nu,\Omega) -q \Big), \label{eq:F-cir-qts}
\end{align}
\begin{align}
    G(t;T,\nu,\Omega) & = -\frac{Q_1(T-t) \nu + Q_2(T-t) \Omega + Q_3(T-t)}{Q_4(T-t)\Omega+Q_5(T-t)},
    &
    H(t;T,\nu,\Omega) & = -\frac{Q_6(T-t)\Omega+Q_7(T-t)}{Q_4(T-t)\Omega+Q_5(T-t)}, \label{eq:GH-cir-qts}
\end{align}
where the functions $Q_i(t)$ for $i\in \{1,2,\ldots,7\}$ are given by 
\begin{align}
    Q_1(t) & := 2 \mu  \ee^{\tfrac{1}{2}\mu t},
    &
    Q_2(t) &:= \frac{8\del^2}{\mu } \left(\ee^{\tfrac{1}{2}\mu t}-1\right)^2 \left(\frac{- \kappa^2\theta }{\del^2 }\right)-\frac{\kappa\theta  Q_4(t)}{\del^2 },
    \\
    Q_3(t) & :=  -\frac{\kappa\theta }{\del^2}\left(\frac{\kappa}{\mu } Q_7\left(\frac{t}{2}\right)Q_5\left(\frac{t}{2}\right)-Q_1(t)+Q_5(t)\right),
    &
    Q_4(t)& := 4\del^2  (1-\ee^{\mu t}),
    \\
    Q_5(t) & := \mu  (\ee^{\mu t}+1)+2\kappa (\ee^{\mu t}-1),
    &
    Q_6(t) & := \mu  (\ee^{\mu t}+1)-2\kappa (\ee^{\mu t}-1),
    \\
    Q_7(t) & := 2 (1-\ee^{\mu t}),
    &
    \mu &  := 2 \sqrt{\kappa^2 + 2\del^2 }.
\end{align}
\noindent
Next, from \eqref{eq:A-1d} we have the form of the generator
\begin{align}
\Act(t) 
  &=  c(t,x,y) (\d_x^2 - \d_x ) + f(t,x,y) \d_y + g(t,x,y) \d_y^2  + h(t,x,y) \d_x \d_y, 
\end{align}
where the functions $c$, $f$, $g$ and $h$ are given by 
\begin{align}
c(t,x,y)&  = \tfrac{1}{2} \del^2 \Big( 1 + \frac{\ee^{-x}}{\tau} \Big)^2 \Big( \Gf(t;\Tb) - \Gf(t;T) + 2 \Big( \Hf(t;\Tb) - \Hf(t;T) \Big) y \Big)^2,
  \\
f(t,x,y)
  & = \kappa\theta-\kappa y  - \del^2\Big( \Gf(t;\Tb) +2 \Hf(t;\Tb) y \Big), \\
g(t,x,y)
  &= \tfrac{1}{2}\del^2 , \\
h(t,x,y)
  &=  \del^2 \Big( 1 + \frac{\ee^{-x}}{\tau} \Big) \Big( \Gf(t;\Tb) - \Gf(t;T) + 2 \Big( \Hf(t;\Tb) - \Hf(t;T) \Big) y \Big).
\end{align}
Introducing the notation 
\begin{align} \chi_{i,j}(t,x,y) & := \frac{1}{i! j!}\d_x^i \d_{y}^j \chi(t,x,y)  &
	&\text{where}&  \chi & \in \{c,f,g,h\},
\end{align}
the explicit implied volatility approximation $\sigb_n$ can now be computed up to order $n=2$ using the formulas in Appendix \ref{sec:explicit-expressions}. We have
\begin{align}
    \sig_0 & = \sqrt{\frac{2}{T-t}\int_{t}^T \dd s \, c_{0,0}(s,x,y_2)},
\\
    \sig_1 & = \frac{(k-x)}{(T-t)^2\sig^3_0}\Big(2\int_{t}^T\dd s \, c_{1,0}(s,x,y_2)\int_{t}^s \dd q \, c_{0,0}(q,x,y_2)+ \int_{t}^T \dd s \, c_{0,1}(s,x,y_2)\int_{t}^s \dd q \, h_{0,0}(q,x,y_2)\Big)
    \\
     &\quad + \frac{1}{2(T-t)\sig_0}\int_{t}^T \dd s \, c_{0,1}(s,x,y_2)\Big(2\int_{t}^s \dd q \, f_{0,0}(q,x,y_2)+ \int_{t}^s \dd q \, h_{0,0}(q,x,y_2)\Big) ,
\end{align}
where we have omitted the 2nd order term $\sig_2$ due to its considerable length.
\\[0.5em]
{In Figures \ref{fig:qts-vasicek-iv} and \ref{fig:qts-cir-iv}}, using different parameters for $(\kappa,\theta,\del,q,y)$, we plot our explicit approximation of implied volatility $\sigb_n$ up to order $n=2$ as a function of $\log$-moneyness $k-x$ with $t=0$ and $\Tb = 2$ fixed and with reset date ranging over $T = \{\frac{1}{64},\frac{1}{32},\frac{1}{16},\frac{1}{8}\}$.
{For comparison, we also plot the ``exact'' implied volatility $\sig$, which   can be computed using $\Tb$-forward caplet prices using \eqref{eq:v-explicit} and inverting the Black formula \eqref{eq:black-price} numerically. }
In both figures, we observe that the second order approximation $\sigb_2$ accurately matches the level, slope, and convexity of the exact implied volatility $\sig$ near-the-money for all four reset dates.
\\[0.5em]
{In Figures \ref{fig:qts-vasicek-err} and \ref{fig:qts-cir-err}}, using the same values for $(\kappa,\theta,\del,q,y)$ as in Figures \ref{fig:qts-vasicek-iv} and \ref{fig:qts-cir-iv}, respectively, we plot the absolute value of the relative error of our second order approximation $|\sigb_2-\sig|/\sig$ as a function of $\log$-moneyness $k-x$ and reset date $T$.
Consistent with the asymptotic accuracy results \eqref{eq:accuracy}, we observe that the {errors decrease} as we approach the origin in both directions of $k-x$ and $T$.

%
%

\appendix

\section{Explicit expressions for \texorpdfstring{$\sig_0$}{}, \texorpdfstring{$\sig_1$}{} and \texorpdfstring{$\sig_2$}{}} 
\label{sec:explicit-expressions}
In this appendix we give the expressions for the implied volatility approximation using \eqref{eq:sig-0} and \eqref{eq:sig-n} explicitly up to second order for $d=1$ in terms of the coefficients $c$,$f$,$g$, and $h$ of $\Act$, given in \eqref{eq:A-1d}, by performing Taylor's series expansion of the coefficients around $\zb(t) = (x,y)$.  To ease the notation, we define
\begin{align}
 \chi_{i,j}(t) \equiv \chi_{i,j}(t,x,y) &   = \frac{\d^i_x\d^j_y\chi(t,x,y)}{i!j!}, &  \chi & \in \{c,f,g,h\}. \label{eq:chi-ij}
\end{align}
The zeroth order term $\sig_0$ is given by
\begin{align}
        \sig_0 &  = \sqrt{\frac{2}{T-t}\int_{t}^T \dd s \, c_{0,0}(s) } .
\end{align}
Next, let us define 
\begin{align}
\mathscr{H}_n(\Theta) &:= \Big(\frac{-1}{\sigma_0\sqrt{2(T-t)}}\Big)^n \mathsf{H}_n(\Theta), & 
\Theta &:= \frac{x-k-\frac{1}{2}\sigma^2_0(T-t)}{\sig_0\sqrt{2(T-t)}}.
\end{align}
where $\mathsf{H}_n(\Theta)$ is the  $n$th-order \emph{Hermite polynomial}.
Then the first order term $\sig_1$ is given by
\begin{align}
\sig_1 & = \sig_{1,0}+\sig_{0,1},  
\end{align}
where $ \sig_{1,0}$ and $\sig_{0,1}$ are given by
\begin{align}
    \sig_{1,0} & = \frac{1}{(T-t)\sig_0}\int_{t}^T\dd s \, c_{1,0}(s)\int_{t}^s \dd q \, c_{0,0}(q)  \Big(2\mathscr{H}_1(\Theta)-1\Big),  \\
    \sig_{0,1} & = \frac{1}{(T-t)\sig_0}\int_{t}^T \dd s \, c_{0,1}(s)\Big(\int_{t}^s \dd q \, f_{0,0}(q)+ \int_{t}^s \dd q \, h_{0,0}(q) \mathscr{H}_1(\Theta)\Big) .
\end{align}
 Lastly, the second order term $\sig_2$ is given by
\begin{align}
\sig_{2}
    & = \sig_{2,0}+\sig_{1,1}+\sig_{0,2}, 
\end{align}
where the terms $\sig_{2,0}$, $\sig_{1,1}$, $\sig_{0,2}$ are given by
\begin{align}
    \sig_{2,0} & = \frac{1}{(T-t)\sig_0}\bigg(\frac{1}{2}\int_{t}^T \dd s \, c_{2,0}(s) \Big((\int_{t}^s \dd q \, c_{0,0}(q) )^2 (4\mathscr{H}_2(\Theta)-4\mathscr{H}_1(\Theta)+1) + 2 \int_{t}^s \dd q \, c_{0,0}(q)  \Big)
    \\
    & \quad + \int_{t}^T \dd s_1 \, \int_{s_1}^T \dd s_2 \, c_{1,0}(s_1)c_{1,0}(s_2)\Big(\int_{t}^{s_1} \dd q_1 \, c_{0,0}(q_1) \int_{t}^{s_2} \dd q_2 \, c_{0,0}(q_2)
    \\
    & \quad \times 
    \big(4\mathscr{H}_4(\Theta)-8\mathscr{H}_3(\Theta)+5\mathscr{H}_2(\Theta)-\mathscr{H}_1(\Theta)\big) + \int_{t}^{s_1} \dd q_1 \, c_{0,0}(q_1)  \Big(6\mathscr{H}_2(\Theta)-6\mathscr{H}_1(\Theta)+1)\Big)\bigg) 
    \\
    & \quad -\frac{\sig^2_{1,0}}{2}\Big((T-t)\sig_0 (\mathscr{H}_2(\Theta)-\mathscr{H}_1(\Theta))
    + \frac{1}{\sig_0} \Big), \\
    \sig_{1,1} & = \frac{1}{(T-t)\sig_0}\Bigg(\frac{1}{2}\int_{t}^T \dd s \, c_{1,1}(s) \bigg(2\int_{t}^{s} \dd q_1 \, c_{0,0}(q_1) \int_{t}^{s} \dd q_2 \, h_{0,0}(q_2) \mathscr{H}_2(\Theta) 
    \\
    & \quad + \int_{t}^{s} \dd q_1 \, c_{0,0}(q_1)(2\int_{t}^{s} \dd q_2 \, f_{0,0}(q_2)-\int_{t}^{s} \dd q_2 \, h_{0,0}(q_2))\mathscr{H}_1(\Theta) 
    \\
    & \quad -\int_{t}^{s} \dd q_1 \, c_{0,0}(q_1)\int_{t}^{s} \dd q_2 \, f_{0,0}(q_2)+\int_{t}^{s} \dd q_1 \, h_{0,0}(q_1)\bigg) 
    \\
    & \quad + \int_{t}^T \dd s_1 \,\int_{s_1}^{T} \dd s_2 \, c_{1,0}(s_1)c_{0,1}(s_2) \bigg( 2\int_{t}^{s_1} \dd q_1 \, c_{0,0}(q_1)\int_{t}^{s_2} \dd q_2 \, h_{0,0}(q_2)\mathscr{H}_4(\Theta) 
        \\
        & \quad +\int_{t}^{s_1} \dd q_1 \, c_{0,0}(q_1) \Big(2\int_{t}^{s_2} \dd q_2 \, f_{0,0}(q_2) - 3\int_{t}^{s_2} \dd q_2 \, h_{0,0}(q_2)\Big)\mathscr{H}_3(\Theta) \\
        & \quad +(\int_{t}^{s_1} \dd q \, c_{0,0}(q)(\int_{t}^{s_2} \dd q \, h_{0,0}(q)-3 \int_{t}^{s_2} \dd q \, f_{0,0}(q))+\int_{t}^{s_1} \dd q \, h_{0,0}(q)\Big) \mathscr{H}_2(\Theta) 
        \\
        & \quad + \Big(\int_{t}^{s_1} \dd q_1 \, c_{0,0}(q_1)\int_{t}^{s_2} \dd q_2 \, f_{0,0}(q_2)-\int_{t}^{s_1} \dd q_1 \, h_{0,0}(q_1)\Big)\mathscr{H}_1(\Theta) \bigg)
        \\
        & \quad + \int_{t}^T \dd s_1 \,\int_{s_1}^{T} \dd s_2 \, c_{0,1}(s_1)c_{1,0}(s_2) \bigg(2 \int_{t}^{s_1} \dd q_1 \, h_{0,0}(q_1) \int_{t}^{s_2} \dd q_2 \, c_{0,0}(q_2) \mathscr{H}_4(\Theta) 
        \\
        & \quad +  \Big(2\int_{t}^{s_1} \dd q_1 \, f_{0,0}(q_1)-3\int_{t}^{s_1} \dd q_1 \, h_{0,0}(q_1)\Big)\int_{t}^{s_2} \dd q_2 \, c_{0,0}(q_2)\mathscr{H}_3(\Theta) 
        \\
        & \quad + \Big(\big(\int_{t}^{s_1} \dd q_1 \, h_{0,0}(q_1)-3 \int_{t}^{s_1} \dd q_1 \, f_{0,0}(q_1)\big)\int_{t}^{s_2} \dd q_2 \, c_{0,0}(q_2) + 3 \int_{t}^{s_1} \dd q_1 \, h_{0,0}(q_1)\Big)\mathscr{H}_2(\Theta) 
        \\
        & \quad + \Big(\int_{t}^{s_1} \dd q_1 \, f_{0,0}(q_1)(2 +\int_{t}^{s_2} \dd q_2 \, c_{0,0}(q_2))-2\int_{t}^{s_1} \dd q_1 \, h_{0,0}(q_1)\Big) \mathscr{H}_1(\Theta) - \int_{t}^{s_1} \dd q_1 \, f_{0,0}(q_1) \bigg)
        \\
        & \quad + \int_{t}^T \dd s_1 \,\int_{s_1}^{T} \dd s_2 \,f_{1,0}(s_1)c_{0,1}(s_2)\int_{t}^{s_1} \dd q_1 \, c_{0,0}(q_1)\Big(2\mathscr{H}_1(\Theta) -1\Big)
        \\
        & \quad + 2\int_{t}^T \dd s_1 \,\int_{s_1}^{T} \dd s_2 \,h_{1,0}(s_1)c_{0,1}(s_2)\int_{t}^{s_1} \dd q_1 \, c_{0,0}(q_1)\Big(2\mathscr{H}_2(\Theta)  -\mathscr{H}_1(\Theta)\Big) \Bigg)
        \\
        & \quad -\sig_{1,0}\sig_{0,1}\Big((T-t)\sig_0 (\mathscr{H}_2(\Theta)-\mathscr{H}_1(\Theta))
    + \frac{1}{\sig_0} \Big), \\
    \sig_{0,2} & = \frac{1}{(T-t)\sig_0}\Bigg(\frac{1}{2}\int_{t}^T \dd s \, c_{0,2}(s) \bigg(\Big(\int_{t}^{s} \dd q \, h_{0,0}(q)\big)^2 \mathscr{H}_2(\Theta) + 2\int_{t}^{s} \dd q_1 \, h_{0,0}(q_1)\int_{t}^{s} \dd q_2 \, f_{0,0}(q_2)\mathscr{H}_1(\Theta) 
    \\
    & \quad + (\int_{t}^{s} \dd q \, f_{0,0}(q))^2 + 2\int_{t}^{s} \dd q \, g_{0,0}(q)\bigg)
    \\
    & \quad + \int_{t}^T \dd s_1 \,\int_{s_1}^{T} \dd s_2 \, c_{0,1}(s_1)c_{0,1}(s_2) \bigg( \int_{t}^{s_1} \dd q_1 \, h_{0,0}(q_1) \int_{t}^{s_2} \dd q_2 \, h_{0,0}(q_2) \mathscr{H}_4(\Theta)
    \\
    & \quad +  \Big(\int_{t}^{s_1} \dd q_1 \, f_{0,0}(q_1) \int_{t}^{s_2} \dd q_2 \, h_{0,0}(q_2) + \int_{t}^{s_1} \dd q_1 \, h_{0,0}(q_1)\int_{t}^{s_2} \dd q_2 \, f_{0,0}(q_2) \\
    & \quad -\int_{t}^{s_1} \dd q_1 \, h_{0,0}(q_1) \int_{t}^{s_2} \dd q_2 \, h_{0,0}(q_2) \Big)  \mathscr{H}_3(\Theta)
    \\
    & \quad + \Big(2\int_{t}^{s_1} \dd q \, g_{0,0}(q) + \int_{t}^{s_1} \dd q_1 \, f_{0,0}(q_1) \int_{t}^{s_2} \dd q_2 \, f_{0,0}(q_2) -\int_{t}^{s_1} \dd q_1 \, f_{0,0}(q_1) \int_{t}^{s_2} \dd q_2 \, h_{0,0}(q_2)
    \\
    & \quad - \int_{t}^{s_2} \dd q_1 \, f_{0,0}(q_1) \int_{t}^{s_1} \dd q_2 \, h_{0,0}(q_2)\Big)\mathscr{H}_2(\Theta) 
    \\
    & \quad -\Big(2\int_{t}^{s_1} \dd q \, g_{0,0}(q)+\int_{t}^{s_1} \dd q_1 \, f_{0,0}(q_1) \int_{t}^{s_2} \dd q_2 \, f_{0,0}(q_2)\Big) \mathscr{H}_1(\Theta)\bigg)
    \\
    & \quad + \int_{t}^T \dd s_1 \,\int_{s_1}^{T} \dd s_2 \, f_{0,1}(s_1)c_{0,1}(s_2)\Big(\int_{t}^{s_1} \dd q \, h_{0,0}(q)\mathscr{H}_1(\Theta)+\int_{t}^{s_1} \dd q \, f_{0,0}(q)\Big)
    \\
    & \quad + \int_{t}^T \dd s_1 \,\int_{s_1}^{T} \dd s_2 \, h_{0,1}(s_1)c_{0,1}(s_2)\Big(\int_{t}^{s_1} \dd q \, h_{0,0}(q) \mathscr{H}_2(\Theta) + \int_{t}^{s_1} \dd q \, f_{0,0}(q)\mathscr{H}_1(\Theta)\Big)\Bigg)
    \\
    & \quad -\frac{\sig^2_{0,1}}{2}\Big((T-t)\sig_0 (\mathscr{H}_2(\Theta)-\mathscr{H}_1(\Theta))
    + \frac{1}{\sig_0} \Big).
\end{align}
Note that, although $\mathscr{H}_3(\Theta)$ and $\mathscr{H}_4(\Theta)$ appear in the expressions for $\sig_{2,0}, \sig_{1,1}$ and $\sig_{0,2}$, the 3rd and 4th order terms in $k-x$ cancel the 3rd and 4th order terms resulting from $\{\sig_{1,0}^2\mathscr{H}_2(\Theta), \sig_{1,0}^2\mathscr{H}_1(\Theta) \}$,
$ \{\sig_{0,1}\sig_{1,0}\mathscr{H}_2(\Theta), \sig_{0,1}\sig_{1,0}\mathscr{H}_1(\Theta)\}$, and $\{ \sig_{0,1}^2\mathscr{H}_2(\Theta),\sig_{0,1}^2\mathscr{H}_1(\Theta)\}$, respectively, resulting in a second order implied volatility expansion that is quadratic in $k-x$.

\bibliography{bibliography}

\clearpage

\begin{figure}
\centering
\begin{tabular}{cc}
\includegraphics[width=0.45\textwidth]{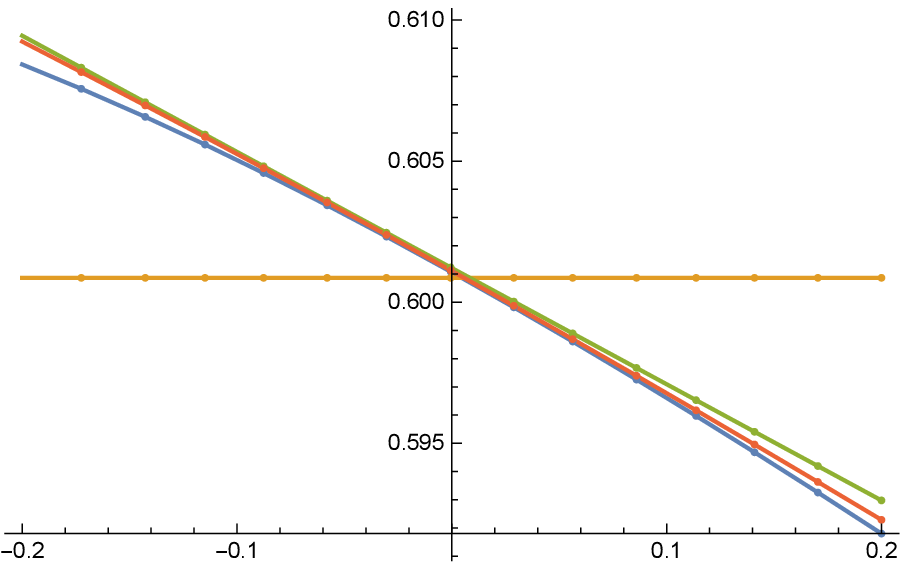}&
\includegraphics[width=0.45\textwidth]{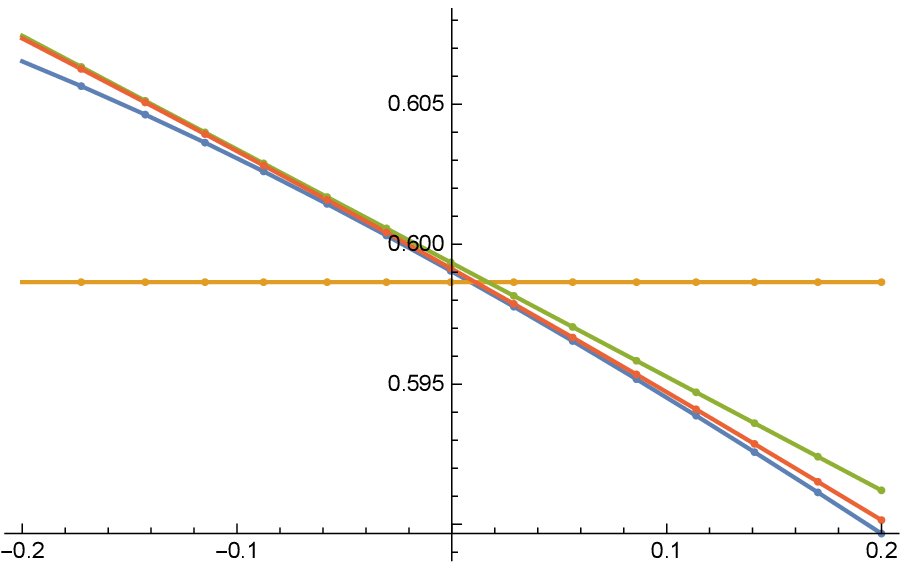}\\
$T = \frac{1}{64}$ & $T = \frac{1}{32}$ \\[1em]
\includegraphics[width=0.45\textwidth]{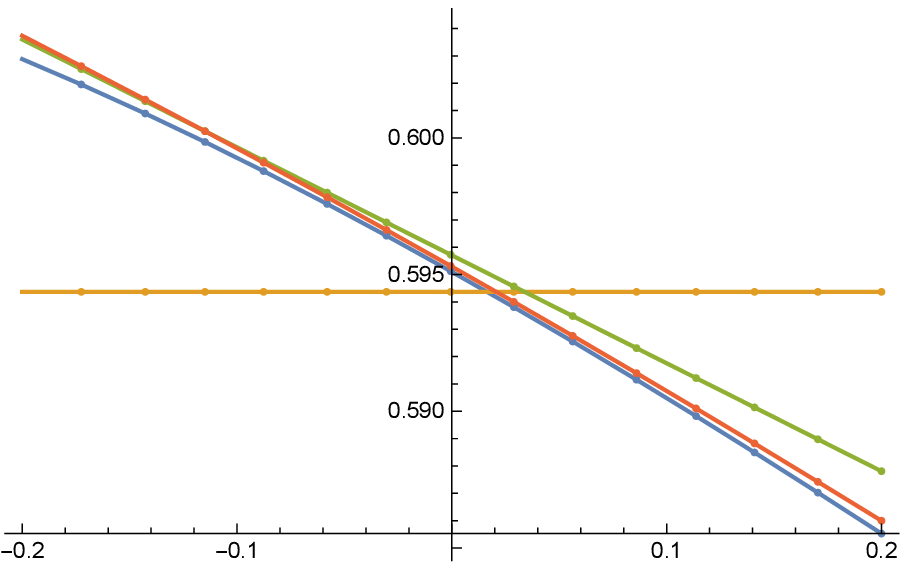}&
\includegraphics[width=0.45\textwidth]{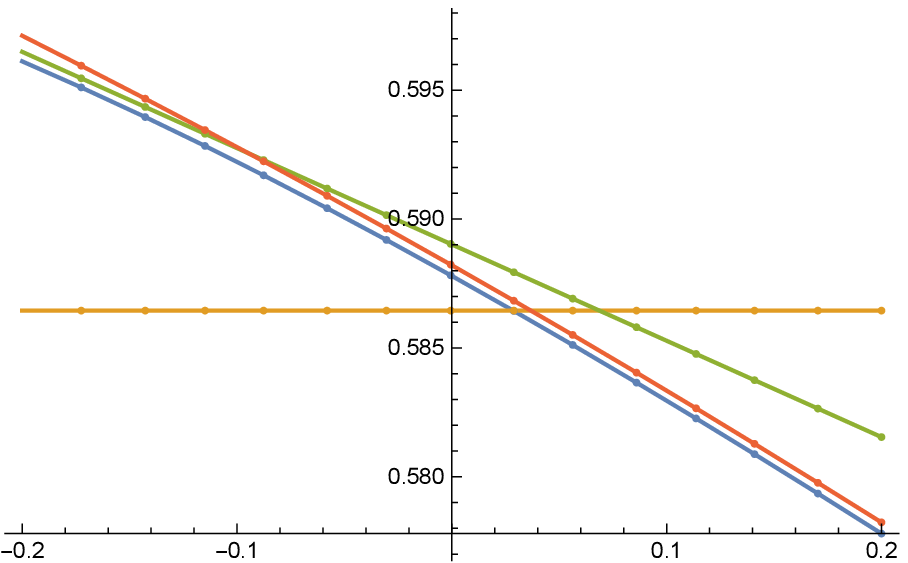}\\
$T = \frac{1}{16}$ & $T = \frac{1}{8}$
\end{tabular}
\caption{For the QOU model described in Section \ref{sec:examples}, we plot exact implied volatility $\sig$ and approximate implied volatility $\sigb_n$ up to order $n=2$ as a function of $\log$-moneyness $k-x$ with the initial date and settlement date of the caplet is fixed at $t = 0$ and $\Tb = 2$, respectively, and with the reset date of the caplet taking the following values $T = \{\frac{1}{64}, \frac{1}{32}, \frac{1}{16}, \frac{1}{8}\}$.  The zeroth, first, and second order approximate implied volatilities correspond to the orange, green and red curves, respectively, and the blue curve corresponds to the exact implied volatility. 
The following parameters remained fixed: $q=0$, $\kappa = 0.9$, {$\theta = \frac{0.25}{0.9}$, {$\delta = 0.2$}}, $y = \sqrt{0.08}$.}
\label{fig:qts-vasicek-iv}
\end{figure}

\begin{figure}
\centering
\begin{tabular}{cc}
\includegraphics[width=0.45\textwidth]{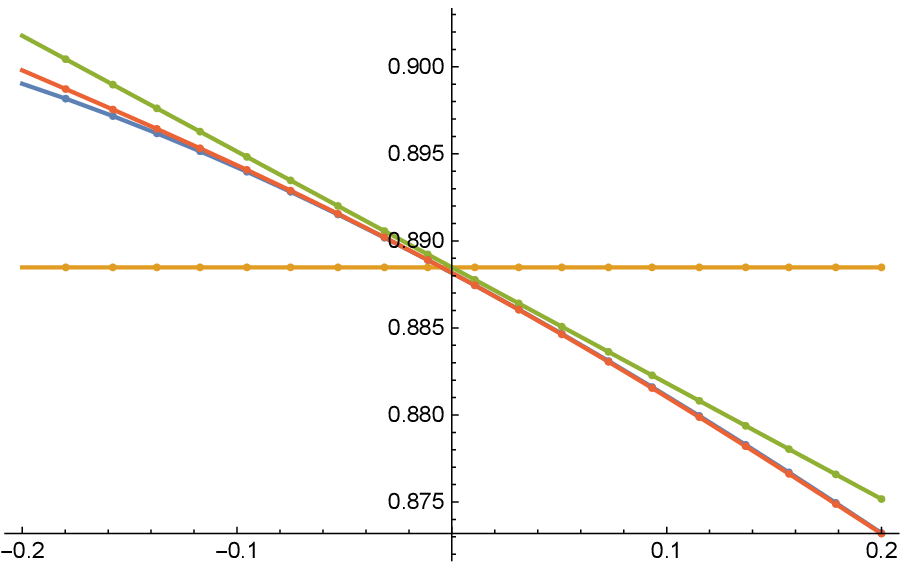}&
\includegraphics[width=0.45\textwidth]{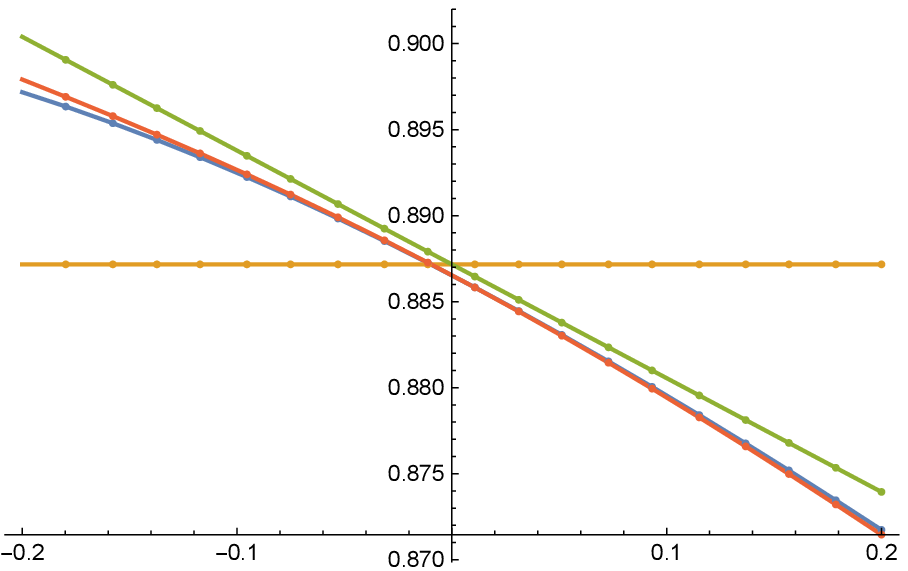}\\
$T = \frac{1}{64}$ & $T = \frac{1}{32}$ \\[1em]
\includegraphics[width=0.45\textwidth]{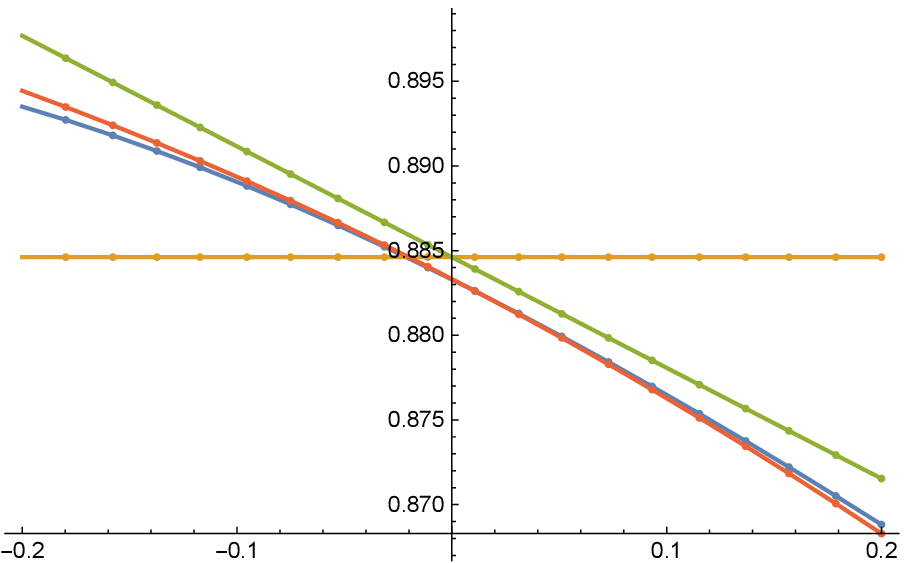}&
\includegraphics[width=0.45\textwidth]{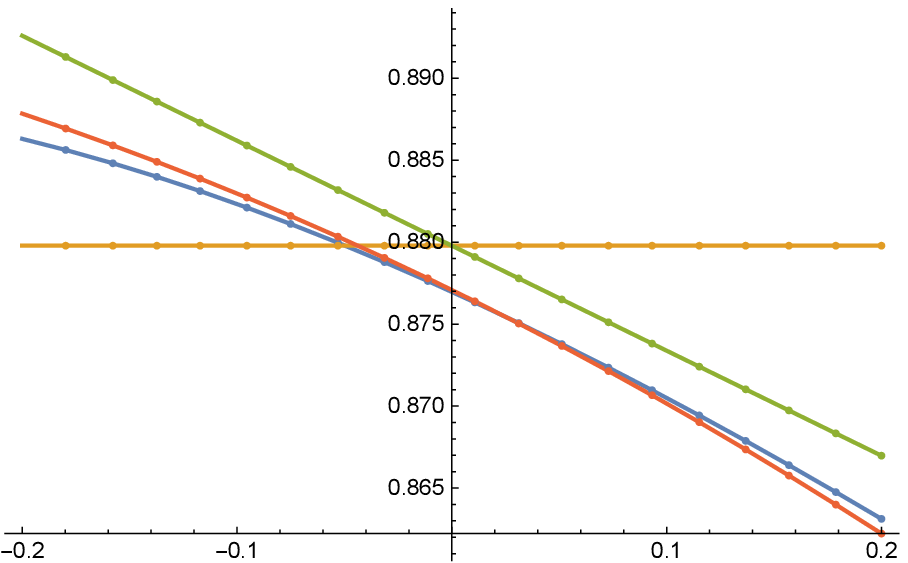}\\
$T = \frac{1}{16}$ & $T = \frac{1}{8}$
\end{tabular}
\caption{For the QOU model described in Section \ref{sec:examples}, we plot the exact implied volatility $\sig$ and approximate implied volatility $\sigb_n$ up to order $n=2$ as a function of $\log$-moneyness $k-x$ with the initial date and settlement date of the caplet is fixed at $t = 0$ and $\Tb = 2$, respectively, and with the reset date of the caplet taking the following values $T = \{\frac{1}{64}, \frac{1}{32}, \frac{1}{16}, \frac{1}{8}\}$.  The zeroth, first, and second order approximate implied volatilities correspond to the orange, green and red curves, respectively, and the blue curve corresponds to the exact implied volatility.  The following parameters remained fixed: 
$\kappa = 0.045$, {$\delta = \sqrt{0.035}$}, $y = \sqrt{0.08}$, $\theta = q = 0$.}
\label{fig:qts-cir-iv}
\end{figure}

\begin{figure}
\centering
\begin{tabular}{c}
\includegraphics[width=0.8\textwidth]{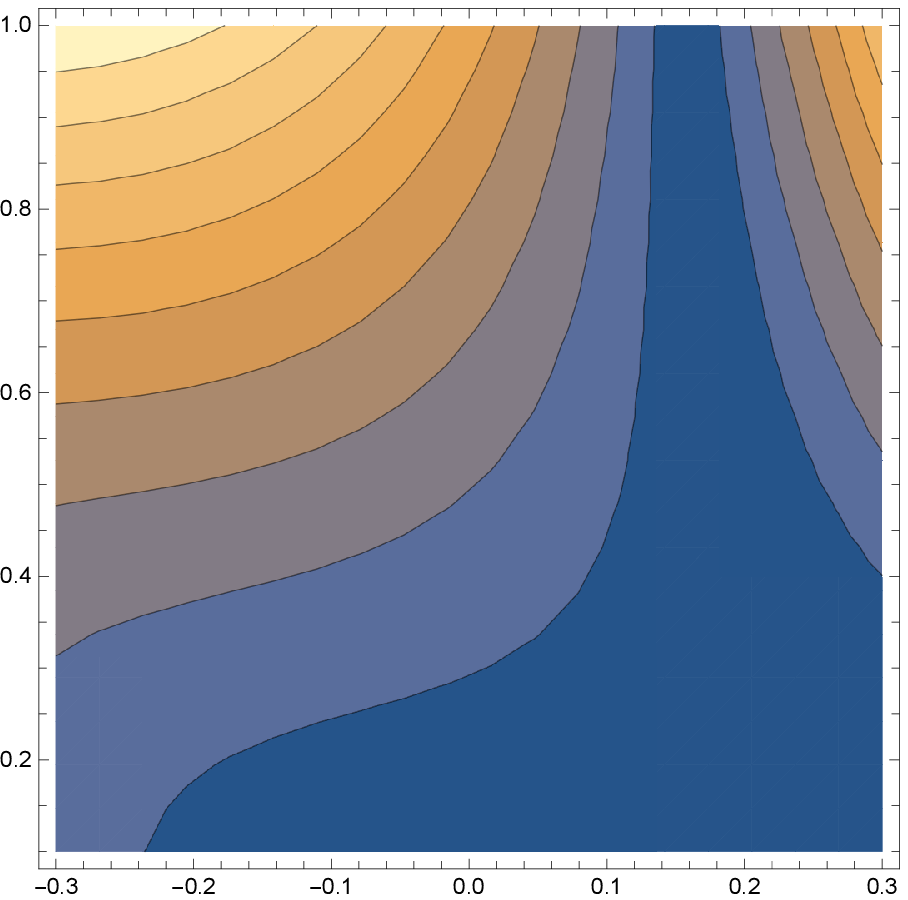}
\end{tabular}
\caption{
For the QOU model described in Section \ref{sec:examples}, we plot the absolute value of the relative error of our second order implied volatility approximation $|\sigb_2 - \sig|/\sig$ as a function of
$\log$-moneyness $k-x$ and caplet reset date $T$. The horizontal axis represents $\log$-moneyness $k -x$ and the vertical axis represents caplet reset date $T$. Ranging from
darkest to lightest, the regions above represent relative errors in increments of {$0.002$  from $< 0.002$ to $> 0.018$}. The initial date and settlement date of the caplet is fixed at $t = 0$ and $\Tb = 2$, respectively. 
The following parameters remained fixed:  $q=0$, $\kappa = 0.9$, {$\theta = \frac{0.25}{0.9}$, {$\delta = 0.2$}}, $y = \sqrt{0.08}$.}
\label{fig:qts-vasicek-err}
\end{figure}

\begin{figure}
\centering
\begin{tabular}{c}
\includegraphics[width=0.8\textwidth]{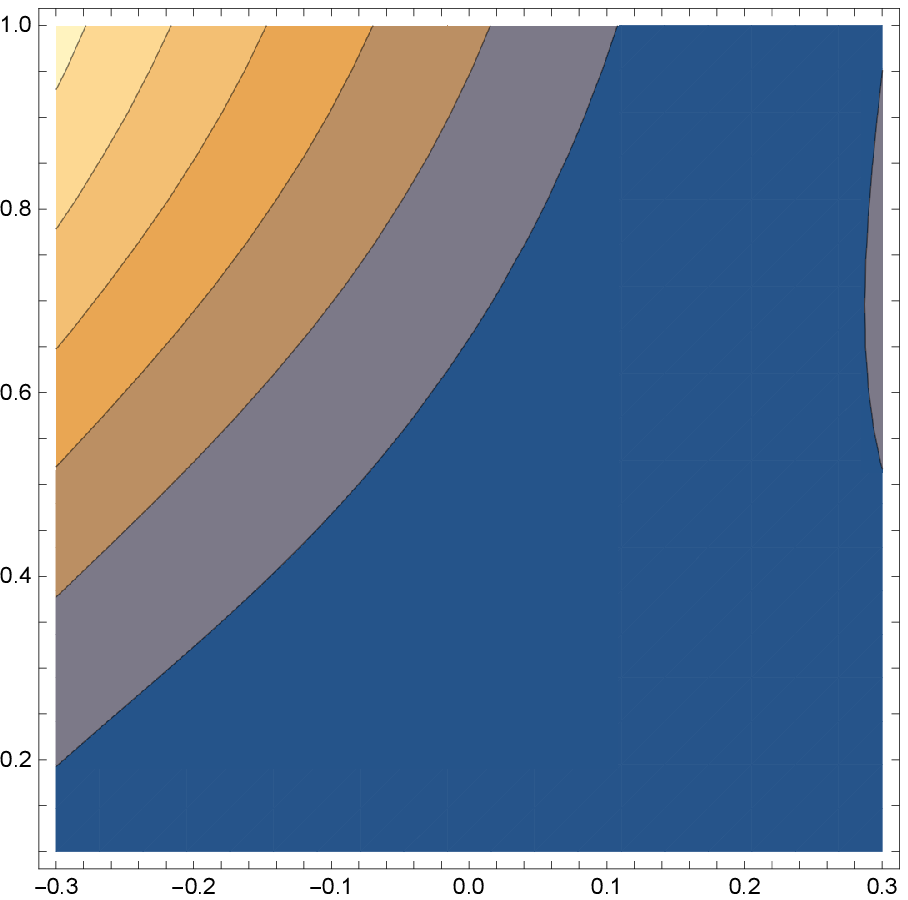}
\end{tabular}
\caption{
For the QOU model described in Section \ref{sec:examples}, we plot the absolute value of the relative error of our second order implied volatility approximation $|\sigb_2 - \sig|/\sig$ as a function of
$\log$-moneyness $k-x$ and caplet reset date $T$. The horizontal axis represents $\log$-moneyness $k -x$ and the vertical axis represents caplet reset date $T$. Ranging from
darkest to lightest, the regions above represent relative errors in increments of {$0.005$  from $< 0.005$ to $0.03$}. The initial date and settlement date of the caplet is fixed at $t = 0$ and $\Tb = 2$, respectively. The following parameters remained fixed: {$\kappa = 0.045$, {$\delta = \sqrt{0.035}$}, $y = \sqrt{0.08}$, $\theta = q = 0$}.}
\label{fig:qts-cir-err}
\end{figure}

\clearpage


\end{document}